\documentclass[twocolumn,secnumarabic,amssymb, nobibnotes, aps, prd, superscriptaddress, hidelinks]{revtex4-2}
\usepackage{orcidlink}
\usepackage{amsmath, subfigure}
\usepackage{amssymb, amscd, amsthm, amsfonts}

\setlength{\textheight}{9.5in}

\newtheorem{theorem}{Theorem}
\newtheorem{lemma}[theorem]{Lemma}
\newtheorem{conjecture}[theorem]{Conjecture}
\newtheorem{proposition}[theorem]{Proposition}

\newtheorem{assumption}{Hypothesis}


\newcommand{\Z}{\mathbb{Z}}

\newcommand{\R}{\mathbb{R}}

\newcommand{\T}{\mathbb{T}}

\newcommand{\D}{\mathrm{d}}
\providecommand{\Zd}{{\mathbb Z^d}}
\providecommand{\nl}{ \nonumber \\ }
\providecommand{\eps}{\varepsilon}

\newcommand{\bn}{{\bf n}}

\newcommand{\inv}{^{-1}}
\newcommand{\veee}[1]{|\!|\!|#1|\!|\!|}

\providecommand{\nnnorm}[1]{\veee #1}
\usepackage{dsfont}
\newcommand{\1}{\mathds{1}}

\providecommand{\abs}[1]{|#1|}

\newcommand{\dc}{ d_{\mathrm{c}} }
\newcommand{\dca}{ d_{\mathrm{c},\alpha} }
\newcommand{\betac}{{\beta_{\mathrm{c}}}}
\newcommand{\nuc}{\nu_{\mathrm{c}}}
\newcommand{\Gbar}{\Gamma}
\newcommand{\free}{\mathrm{F}}
\newcommand{\omegaR}{{\omega_R}}
\newcommand{\bs}{\betac(s)}
\newcommand{\bRs}{\beta_{R,\mathrm{c}}(s)}
\newcommand{\scale}{L}


\usepackage[greek,english]{babel}
\def\coppa{{\fontencoding{LGR}\fontfamily{cmr}\selectfont\textqoppa}}
\def\q{\hbox{\foreignlanguage{greek}{\coppa}}}
\def\qq{{\hbox{\foreignlanguage{greek}{\footnotesize\coppa}}}}

\begin{document}

\title{Universal finite-size scaling in high-dimensional critical phenomena}%

\author{Yucheng Liu\,\orcidlink{0000-0002-1917-8330}}%
\email[Liu: ]{yliu135@pku.edu.cn}
\address{Beijing International Center for Mathematical Research, Peking University, Beijing, China 100871}
\address{Department of Mathematics, University of British Columbia,
Vancouver BC, Canada V6T1Z2}
\author{Jiwoon Park\,\orcidlink{0000-0002-1159-2676}}%
\email[Park: ]{jp711@cantab.ac.uk}
\address{Department of Mathematics,
  Republic of Korea Air Force Academy,
  635, Danjae-ro, Cheongju-si, Chungcheongbuk-do, Republic of Korea}
\author{Gordon Slade\,\orcidlink{0000-0001-9389-9497}}%
\email[Slade: ]{slade@math.ubc.ca}
\address{Department of Mathematics, University of British Columbia,
Vancouver BC, Canada V6T1Z2}

\begin{abstract}
We present a new unified theory of critical finite-size scaling for
lattice statistical mechanical models
with periodic boundary conditions above the upper critical dimension.
Our theory is based on recent mathematically rigorous results for linear and
branched polymers, multi-component spin systems, and percolation.
Both short-range and long-range interactions are included.
The universal finite-size scaling is inherited from the scaling
of the system unwrapped to the infinite lattice.
We also present conjectures for universal scaling profiles for
the susceptibility and two-point function plateau in a critical window.
For free boundary conditions,
the universal scaling
has been proven to apply at a pseudocritical point
for hierarchical spins,
and we conjecture that this holds generally.
\end{abstract}

\maketitle

\section{Introduction}
\label{sec:intro}

The subject of critical phenomena for lattice models in statistical
mechanics is a cornerstone of theoretical physics.  The critical
behaviour of the Ising and multi-component $|\varphi|^4$ models for ferromagnetism,
of percolation, and of linear
and branched polymers, forms a central part of the theory.
The rich and universal fractal geometry connected with
second-order phase transitions
in these models, and its characterisation in terms of critical
exponents, is an ongoing source of fascination both in physics and in mathematics.

It has long been understood that the dependence on the spatial dimension
is an important feature in critical phenomena.
In particular, there is typically an
upper critical dimension $\dc$ above which mean-field critical exponents occur.
For models with short-range (SR) interactions, $\dc=4$ for Ising and $|\varphi|^4$
spin systems, $\dc=4$ for linear polymers (self-avoiding walk),
$\dc=6$ for percolation, and $\dc=8$ for
branched polymers (lattice trees and animals).

Since laboratory samples and simulation experiments involve finite systems,
the finite-size scaling (FSS) of phase transitions forms an important part of
the theory.  Above the upper critical dimension, the role played
by boundary conditions in FSS has been a subject of
some controversy and much discussion, e.g., \cite{LB97,JY05,BKW12,WY14,F-SBKW16,GEZGD17,KB17,ZGFDG18,BEHK22,DGGZ22,FMPPS23,DGGZ24,LM16}.
Some of the history is recounted in \cite{BEHK22}.
It is therefore useful to have mathematically rigorous results for FSS.

In this work, we present a new and unified theory of FSS under
periodic boundary conditions (PBC), based on unwrapping the finite model
to the infinite lattice as in Figure~\ref{figure:LTlift}.
A general comparison principle between the original and the unwrapped
systems is proposed and verified for the Ising model, the self-avoiding walk, percolation, and branched polymers above their upper critical dimensions, using the rigorous results of \cite{Slad23_wsaw,Liu25EJP,HMS23,LPS25-Ising,LS25a}.
Thereby, we obtain mathematically rigorous results on the
finite-size critical exponents for these models under PBC.

\begin{figure}[h]
\parbox{0.4\linewidth}{\includegraphics[scale = 0.6]{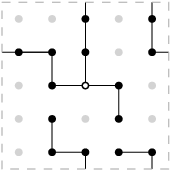}}
\parbox{0.5\linewidth}{\includegraphics[scale = 0.6]{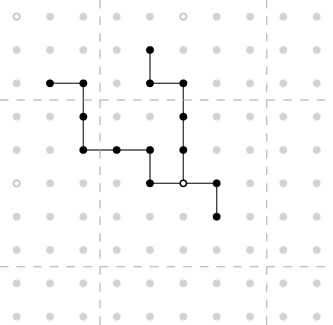}}
\caption{The unwrapping of a lattice tree from the 2-dimensional discrete torus
of period 5 to the infinite lattice.
}
\label{figure:LTlift}
\end{figure}

For all of the SR models mentioned above, we review recent proofs that
in a volume $V$ with PBC
in dimensions $d>\dc$,  and at the infinite-volume critical point,
the susceptibility has size (at least) $V^{\frac{2}{\dc}}$,
and the two-point function has a plateau of size (at least) $V^{\frac{2}{\dc}-1}$.
We also identify a critical window of width (or \emph{rounding scale}) $V^{-\frac{2}{\gamma \dc}}$ around the infinite-volume critical point, where $\gamma$ is the critical exponent for the infinite-volume susceptibility.

Long-range (LR) versions of the models have couplings
decaying like $r^{-(d+\alpha)}$, with $\alpha \in (0,2)$ for $d \ge 2$
and $\alpha \in (0,1)$ for $d=1$ (the enhanced restriction for $d=1$ is
to ensure a second-order phase transition).
LR models have upper critical dimensions $\dca =\frac{\alpha}{2}\dc$, with $\dc$ the
SR upper critical dimension.  For LR models, $d>\dca$
includes low dimensions, even $d=1$, when $\alpha$ is small enough.
Our general theory also applies to LR models above $\dca$.
It is always the short-range $\dc$, not the long-range $\dca$, that appears in the finite-size scaling exponents mentioned in the previous paragraph.
We demonstrate this with the long-range self-avoiding walk, using a recent result of \cite{Liu25}.

Our theory confirms some of the findings of
\cite{LB97,JY05,BKW12,WY14,F-SBKW16,GEZGD17,KB17,ZGFDG18,BEHK22,DGGZ22,DGGZ24,FMPPS23},
via a completely different method.
The mechanisms we reveal
are, to a large extent, model-independent, independent of whether the models
are short- or long-range, and mathematically rigorous.
The relevance of unwrapping has been noted previously in \cite{GEZGD17,DGGZ22},
but our perspective on unwrapping is different.
Standard infinite-volume scaling applies to our unwrapped model and directly produces the finite-size scaling exponents.
The larger-than-system torus correlation length as in,
e.g., \cite{JY05,F-SBKW16,BEHK22,FMPPS23},
belongs to the unwrapped model in our work.
The exponent $\q=d/\dc$ (qoppa) which plays an important role,
e.g., in \cite{F-SBKW16,BEHK22},
emerges in our theory in a guise discussed in
Appendix~\ref{app:coppa}.

Throughout the paper, we discuss short-range and long-range models simultaneously.
Formulas involving $\alpha$ apply to LR models for any choice of $\alpha \in (0,\min\{d,2\})$,
and apply to SR models after setting $\alpha=2$.

\medskip\noindent \emph{Notation.}
For dimension $d\ge 1$,
the hypercubic lattice is denoted $\Z^d$, and the
discrete torus of period $R$ and volume $V=R^d$
is denoted $\T_R^d$.  We sometimes identify a point $x$ in the torus with
its representative in $[ - \frac R 2, \frac R 2 )^d \cap \Z^d$.

We write $f \lesssim g$ or $f = O(g)$ to mean there is a constant $C > 0$ such that $f \le Cg$, write $f \ll g$ to mean $\lim f/g =0$,
and write $f\asymp g$ to mean $f \lesssim g \lesssim f$.
We write $|x|= \max\{|x_1|,\ldots,|x_d|\}$ for the $\ell^\infty$ norm, and $\nnnorm x = \max\{|x| , 1\}$ to avoid division by $0$.

\medskip\noindent \emph{Critical exponents.}
For a given model, let
$\betac$ denote the infinite-volume critical point,
and let $t = (\betac-\beta) / \betac > 0$ denote the reduced inverse temperature.
The two-point functions on $\Z^d$ and $\T_R^d$ are denoted, respectively, by
$G_\beta(x)$ and $G_{\beta,R}(x)$.
Let $\chi(\beta)=\sum_{x\in\Z^d}G_\beta(x)$ denote the susceptibility on $\Zd$, and let
$\chi_R(\beta)=\sum_{x\in\T_R^d}G_{R,\beta}(x)$ denote the susceptibility on $\T_R^d$.
The two-point functions and susceptibilities are not truncated.
We assume that both two-point functions are non-decreasing functions of $\beta$.
Let $\xi(\beta)$ denote the correlation length on $\Zd$.
We do not use a correlation length on the torus.
The $\Zd$ critical exponents $\eta,\gamma,\nu$ are defined by
\begin{gather}
G_{\betac}(x) \asymp \frac 1 {\abs x^{d-2+\eta}}
\qquad (|x|\to\infty),\\
\label{eq:gamma_nu}
\chi(\beta) \asymp t^{-\gamma}, \qquad
\xi(\beta) \asymp t^{-\nu} \qquad (t \downarrow 0).
\end{gather}
Above the upper critical dimension $\dca$,
the critical exponents are proved in many cases
(e.g., \cite{HS92a,HS92c,Saka07,Hara08,FH17,HHS08,CS11,CS15})
to take their mean-field values $\eta = 2 - \alpha$, $\nu = \gamma / \alpha$, and
\begin{equation}
\gamma = \begin{cases}
1		&(\text{self-avoiding walk, spin, percolation}) \\
\frac 12		&(\text{lattice trees and animals}) .
\end{cases}
\end{equation}

\medskip
Our main rigorous result (Theorem~\ref{thm:main} in Section~\ref{sec:plateau-theorem})
gives a sufficient condition for the
short- or long-range torus susceptibility and two-point function to obey
\begin{equation}
\label{eq:PBC_lb}
\chi_R(\betac) \gtrsim V^{\frac 2 {\dc}},
\quad
G_{R,\betac}(x) \gtrsim \frac 1 {\nnnorm x^{d-\alpha}} + \frac 1 {V^{1-\frac 2 {\dc}}}
\end{equation}
uniformly in large $R$, with the short-range $\dc$.
The constant ``plateau'' term in the lower bound of $G_{R,\betac}(x)$ dominates $\abs x^{-(d-\alpha)}$
as soon as
\begin{equation}
\label{eq:profilewins}
    |x| \gg  R^p \;\; \text{with}\;\; p
    = \frac{d-\alpha \frac{d}{\dca}}{d-\alpha} .
\end{equation}
Note that the exponent $p$ is less than $1$ for $d > \dca$, so the plateau dominates
over all but a vanishingly small proportion of the torus.
The plateau term contributes $V^{\frac{ 2  }{ \dc }}
=R^{\frac{ 2 d }{ \dc }} = R^{\frac{\alpha d}{\dca}}$ to the susceptibility,
which is much larger than the contribution $R^\alpha$ from the decaying term in the two-point function.
Our proof also identifies the width of the critical window to be $V^{-\frac{2}{\gamma \dc}}$, and it gives matching upper bounds for \eqref{eq:PBC_lb} at the
edge (high-temperature, disordered side) of the critical window.
Theorem~\ref{thm:main} is proved in Section~\ref{sec:proof},
and in Section~\ref{sec:SAW} we verify its sufficient condition for
the case of LR self-avoiding walk.
Further applications of Theorem~\ref{thm:main} are discussed in Section~\ref{sec:applications}.

Beyond the computation of FSS critical exponents, more precise information about the
scaling of the susceptibility and two-point function plateau can be sought.
For the $n$-component $|\varphi|^4$ model in dimensions $d>4$ with PBC,
an exact profile for the amplitude
of the susceptibility in the window
was computed in 1985 using a Wilsonian renormalisation group method \cite{BZ85,Zinn21}; we review this computation in Appendix~\ref{app:RG}.
A mathematically rigorous derivation of the same profile has been carried out recently on the hierarchical lattice \cite{MPS23,PS25}
for dimensions $d \ge 4$ (including
$d=\dc=4$).
For $n=1$, the profile is the same as its analogue on the complete graph
(Curie--Weiss model).
Despite the results of \cite{BZ85}, the profile has subsequently received scant
attention in the FSS literature, for example it is not explicitly mentioned in any of
\cite{LB97,JY05,BKW12,WY14,F-SBKW16,GEZGD17,KB17,ZGFDG18,BEHK22,DGGZ22,FMPPS23,DGGZ24,LM16}.
Part of our purpose is to focus attention on the importance of the profile.
The profiles for self-avoiding walk and branched polymers on the complete graph
have been computed, and we conjecture that these profiles also apply to SR and LR models at and
above their critical dimensions.
We also discuss a profile for percolation,
which is of a different character than the others.
Our conjectures for the universal profiles are stated in Section~\ref{sec:profile}.

For the hierarchical $n$-component $|\varphi|^4$ model with $d \ge \dc=4$,
it has been proved that, under FBC in volume $V=R^d$,
the universal behaviour observed for PBC in the critical window about the infinite-volume
critical point is exactly duplicated around a pseudocritical point which is shifted
from the infinite-volume critical point by $R^{-1/\nu}=R^{-2}$ for $d>4$
(this shift is also observed, e.g., in \cite{Kenn12,WY14}) and by
$R^{-2}(\log R)^{\frac{n+2}{n+8}}$ for $d=4$ \cite{MPS23,PS25}.
The critical windows for PBC and FBC do not overlap when $d \ge\dc$.
In Section~\ref{sec:fbc}, we
formulate a conjecture that similar behaviour holds more generally.

\section{General plateau theorem}
\label{sec:plateau-theorem}

Our general plateau theorem holds under two hypotheses.
The hypotheses have been verified for many examples, discussed
in Sections~\ref{sec:SAW} and~\ref{sec:applications}.
Our first hypothesis concerns the $\Zd$ two-point function $G_\beta(x)$ near criticality.

\begin{assumption} \label{ass:G}
The susceptibility obeys \eqref{eq:gamma_nu} with
critical exponent $\gamma >0$.
There is a correlation length satisfying $\xi(\beta) \asymp t^{-\nu}$ with
$\nu = \gamma/\alpha$,
a function $g:[0,\infty) \to [0,\infty)$
with $g(u) \lesssim (1+u)^{-(\alpha+\eps)}$ for some $\eps > 0$, and a constant $s_1>0$
such that
\begin{align}
\label{eq:Gdecay}
G_\beta(x) &\lesssim \frac{1}{\nnnorm x^{d-\alpha}} g(|x|/\xi(\beta))
	\qquad(x\in \Zd) ,
\\
\label{eq:Glb}
G_\beta(x) &\gtrsim \frac{1}{\nnnorm x^{d-\alpha}}
	~\quad\quad\qquad( \abs x \le s_1 \xi(\beta)) .
\end{align}
\end{assumption}

Hypothesis~\ref{ass:G} asserts that $G_\beta(x)$ has its critical
decay $\abs{x}^{-(d-\alpha)}$ up to the scale of the correlation length,
beyond which it decays more rapidly; this is essentially a definition of a correlation length.
In applications, the function $g$ takes the form
\begin{equation}
\label{eq:gdef}
g_{\mathrm{SR}}(u) = e^{-cu},
\qquad
g_{\mathrm{LR}}(u) = \frac{1}{(1+cu)^{2\alpha}}
\end{equation}
for SR and LR models respectively.

Our second hypothesis concerns the torus two-point function $G_{R,\beta}(x)$.
For its statement, we introduce
the \emph{unwrapped two-point function}
\begin{equation} \label{eq:Gamma}
    \Gbar_{R,\beta}(x)
    = \sum_{u\in \Z^d} G_\beta (x+Ru)
    \qquad(x\in \T_R^d),
\end{equation}
which can be considered either as a function on the torus or as a periodic
function on $\Z^d$. (Recall we identify $\T_R^d$ with
$[ - \frac R 2, \frac R 2 )^d \cap \Z^d$.)
The sum is over all possible unwrapped locations $x+Ru$ of the torus point $x \in \T_R^d$.
The unwrapped two-point function is inspired by the unwrapping shown in Figure~\ref{figure:LTlift}.

\begin{assumption} \label{ass:Gbar}
There are constants $c_0,s_2>0$ such that
\begin{equation} \label{eq:GGbar}
\Big( 1 -  c_0\frac{\chi(\beta)^{\dc/2}}{V} \Big)
\Gbar_{R,\beta}(x) \le G_{R,\beta}(x) \le \Gbar_{R,\beta}(x),
\end{equation}
uniformly in $x\in \T_R^d$ and in $\beta, R$ satisfying
$\xi(\beta) \ge s_2 R$.
\end{assumption}

Hypothesis~\ref{ass:Gbar}
is a comparison principle between the torus and  unwrapped models.
The upper bound of \eqref{eq:GGbar}
reflects lesser interaction in the unwrapped model, and in applications it
holds in any dimension.
The lower bound of \eqref{eq:GGbar} quantifies the error in the comparison, and it
is expected to hold only above the upper critical dimension $\dca$.
We emphasise it is $\chi^{\dc/2}$, \emph{not} $\chi^{\dca/2}$, that occurs in \eqref{eq:GGbar}, even for LR models.
The power $\dc/2$ arises from the topology of
certain Feynman diagrams, which are identical for SR and LR models (see Appendix~\ref{app:LB}).

\begin{theorem} \label{thm:main}
Let $d > \dca$ and let $R$ be sufficiently large.
Under Hypotheses~\ref{ass:G} and~\ref{ass:Gbar},
there is a constant $c_1>0$ such that, at $\beta_*$
defined by $t_* = (\betac-\beta_*) / \betac  = c_1 V^{-\frac 2 {\gamma \dc}}$,
\begin{equation} \label{eq:PBC}
\chi_R(\beta_*) \asymp V^{\frac 2 {\dc}},
\quad
G_{R, \beta_*}(x) \asymp \frac 1 {\nnnorm x^{d-\alpha}} + \frac 1 {V^{1-\frac 2 {\dc}}}
\end{equation}
uniformly in $R$ and in $x \in \T_R^d$.
\end{theorem}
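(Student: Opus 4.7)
\emph{Plan.} The plan is to use the comparison principle of Hypothesis~\ref{ass:Gbar} to reduce the torus two-point function to the unwrapped two-point function $\Gbar_{R,\beta_*}$, then estimate $\Gbar_{R,\beta_*}(x)$ pointwise by splitting the defining sum~\eqref{eq:Gamma} into its $u=0$ term (which produces the $\nnnorm x^{-(d-\alpha)}$ piece) and its $u\neq 0$ tail (which produces the plateau), and finally deduce the susceptibility from the elementary identity $\sum_{x \in \T_R^d} \Gbar_{R,\beta}(x) = \chi(\beta)$.

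\emph{Choice of $\beta_*$ and reduction.} At $t_* = c_1 V^{-2/(\gamma\dc)}$, Hypothesis~\ref{ass:G} yields $\chi(\beta_*) \asymp c_1^{-\gamma} V^{2/\dc}$ and, via $\nu = \gamma/\alpha$, $\xi(\beta_*) \asymp c_1^{-\gamma/\alpha} V^{2/(\alpha\dc)}$. Since $d > \dca = \alpha\dc/2$ gives $2d/(\alpha\dc) > 1$, the ratio $\xi(\beta_*)/R$ diverges with $R$, so the premise $\xi(\beta_*) \ge s_2 R$ of Hypothesis~\ref{ass:Gbar} is met for $R$ large. Moreover $c_0 \chi(\beta_*)^{\dc/2}/V \asymp c_0 c_1^{-\gamma\dc/2}$, which I make at most $1/2$ by taking $c_1$ sufficiently large. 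Hence Hypothesis~\ref{ass:Gbar} gives $G_{R,\beta_*}(x) \asymp \Gbar_{R,\beta_*}(x)$ uniformly in $x$, and summing over $x \in \T_R^d$ (using the identity above) yields $\chi_R(\beta_*) \asymp \chi(\beta_*) \asymp V^{2/\dc}$.

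\emph{Plateau estimate for $\Gbar_{R,\beta_*}(x)$.} The $u=0$ term is handled by Hypothesis~\ref{ass:G}: since $|x| \le R/2 \ll \xi(\beta_*)$, both~\eqref{eq:Gdecay} and~\eqref{eq:Glb} apply and give $G_{\beta_*}(x) \asymp \nnnorm x^{-(d-\alpha)}$. For $u \neq 0$ and $x \in \T_R^d$ one has $R|u|/2 \le |x+Ru| \le 3R|u|/2$. Setting $M \asymp \xi(\beta_*)/R$, I split the tail into the near range $1 \le |u| \le M$ and the far range $|u| > M$. On the near range both sides of Hypothesis~\ref{ass:G} yield $G_{\beta_*}(x+Ru) \asymp (R|u|)^{-(d-\alpha)}$, and a Riemann comparison produces $\sum_{1 \le |u| \le M} (R|u|)^{-(d-\alpha)} \asymp R^{-(d-\alpha)} M^{\alpha} \asymp \xi(\beta_*)^{\alpha}/V \asymp V^{2/\dc - 1}$, using $\xi(\beta_*)^{\alpha} \asymp t_*^{-\gamma} \asymp \chi(\beta_*)$. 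On the far range the decay $g(u) \lesssim (1+u)^{-(\alpha+\eps)}$ gives an upper-bound contribution $\lesssim \xi(\beta_*)^{\alpha+\eps} R^{-(d+\eps)} M^{-\eps} \asymp \xi(\beta_*)^{\alpha}/V$ of the same order. Combining, $\Gbar_{R,\beta_*}(x) \asymp \nnnorm x^{-(d-\alpha)} + V^{2/\dc - 1}$.

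\emph{Main obstacle.} The sensitive step is the plateau estimate. In the LR case the polynomially decaying $g$ makes the near- and far-range contributions comparable, and one must check that no logarithm arises at the crossover $|Ru| \sim \xi(\beta_*)$; the strict inequality $\eps > 0$ in Hypothesis~\ref{ass:G} is precisely what rules this out. A minor complication is that the lower bound~\eqref{eq:Glb} holds only up to $|x| \le s_1 \xi(\beta)$, so in the near-range lower bound one restricts to $|u| \lesssim s_1 \xi(\beta_*)/R$ rather than $\xi(\beta_*)/R$; since $s_1$ is a fixed positive constant, this still delivers the asserted order.
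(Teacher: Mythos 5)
Your proposal is correct and follows essentially the same route as the paper: you fix $c_1$ large to neutralise the error term in Hypothesis~\ref{ass:Gbar}, verify $\xi(\beta_*)\gg R$ from $d>\dca$, and then estimate $\Gbar_{R,\beta_*}(x)$ by splitting the unwrapped sum into the $u=0$ term and a near/far tail with cutoff $M\asymp\xi/R$, which is exactly the content of the paper's Proposition~\ref{prop:Gamma} and Lemma~\ref{lem:periodic_sum} (including your correct handling of the $s_1$ restriction in the lower bound and of the crossover for polynomially decaying $g$). The only variation is your use of the exact identity $\sum_{x\in\T_R^d}\Gbar_{R,\beta}(x)=\chi(\beta)$ to get $\chi_R(\beta_*)\asymp\chi(\beta_*)\asymp V^{2/\dc}$, a valid and slightly more direct step than the paper's summation of the pointwise estimate over the torus.
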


By the monotonicity in $\beta$, Theorem~\ref{thm:main} immediately implies the lower bounds at $\betac$ claimed in \eqref{eq:PBC_lb}.
Theorem~\ref{thm:main} identifies the exponents mentioned at the
end of Section~\ref{sec:intro}:
the powers $V^{-\frac 2 {\gamma \dc}}, V^{\frac 2 {\dc}}, V^{-(1-\frac 2 {\dc} )}$ are the \emph{window scale, susceptibility scale}, and \emph{plateau scale}, respectively.
See also Table~\ref{table:fss}.

\section{Proof of plateau theorem}
\label{sec:proof}

An ingredient in the proof of Theorem~\ref{thm:main} is
the following lemma, whose elementary proof is given in
Appendix~\ref{app:conv}.

\begin{lemma} \label{lem:periodic_sum}
Let $d \ge 1$, $R \ge 1$,
$a >0$ and $\xi >0$.
Suppose $g:[0,\infty) \to [0,\infty)$ satisfies $g(u) \le (1+u)^{-(a+\eps)}$ for some $\eps > 0$.
Then there is a constant $C = C(d,a,\eps)>0$ such that
\begin{equation*}
\sum_{u \in\Z^d : u \neq 0}
\frac{1}{|x + R u|^{d-a}} g \bigg( \frac{ |x+Ru| } \xi \bigg)
\le C \frac{ \xi^a }{R^d}
\end{equation*}
for all $x\in \Zd$ with $| x| \le R/2$.
\end{lemma}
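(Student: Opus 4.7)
My plan is to reduce the lattice sum to a one-dimensional series indexed by $\ell^\infty$ shells $|u|=k$ in $\Z^d$, and then to evaluate that series by splitting at the correlation-length crossover $Rk \asymp \xi$. The first ingredient is a uniform lower bound on the denominators: since $|x| \le R/2$ and $|u| \ge 1$ for $u \neq 0$, the triangle inequality gives
$$|x+Ru| \ge R|u| - |x| \ge R|u| - R/2 \ge \tfrac12 R|u|.$$
Because $g$ is dominated by the decreasing function $v \mapsto (1+v)^{-(a+\eps)}$, this lower bound feeds into both factors of the summand and yields
$$\frac{1}{|x+Ru|^{d-a}} g\!\left(\frac{|x+Ru|}{\xi}\right) \le \frac{2^{d-a}}{(R|u|)^{d-a}}\left(1+\frac{R|u|}{2\xi}\right)^{-(a+\eps)},$$
so the summand depends on $u$ only through $|u|$.

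Next, I would use the shell cardinality $|\{u \in \Z^d : |u|=k\}| = (2k+1)^d - (2k-1)^d \le C_d k^{d-1}$ to pass from the sum over $\Z^d \setminus \{0\}$ to a one-dimensional series. After pulling out the prefactor $R^{-(d-a)}$, the claim reduces to the one-variable estimate
$$S := \sum_{k=1}^\infty k^{a-1}\left(1+\frac{Rk}{2\xi}\right)^{-(a+\eps)} \lesssim \left(\frac{\xi}{R}\right)^{a},$$
since $R^{-(d-a)} (\xi/R)^{a} = \xi^a/R^d$.

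To bound $S$ I would split at the crossover $Rk = 2\xi$. On the on-shell range $Rk \le 2\xi$ the bracket is of order one, and the elementary estimate $\sum_{k=1}^{N} k^{a-1} \lesssim N^a$ (valid for $a>0$ and $N \ge 1$) yields a contribution $\lesssim (\xi/R)^a$, with an empty sum when $\xi < R/2$. On the off-shell range $Rk > 2\xi$ I would replace $(1+Rk/(2\xi))^{-(a+\eps)}$ by $(Rk/(2\xi))^{-(a+\eps)}$, reducing the remainder to
$$\left(\frac{2\xi}{R}\right)^{a+\eps} \sum_{k > 2\xi/R,\, k \ge 1} k^{-1-\eps};$$
this is controlled by the tail bound $\sum_{k>N} k^{-1-\eps} \lesssim N^{-\eps}$ when $\xi \ge R$, and by summability of the full series together with $(\xi/R)^{a+\eps} \le (\xi/R)^a$ when $\xi < R$. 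Both regimes give the desired $(\xi/R)^a$. The argument is entirely elementary and presents no genuine obstacle; the only point requiring any care is the dichotomy $\xi \lessgtr R$ in the off-shell estimate, which mirrors the crossover in $g$ itself between its power-law prefactor and its post-correlation-length decay.
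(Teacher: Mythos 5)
Your proof is correct and follows essentially the same route as the paper's: the lower bound $|x+Ru|\ge \frac{1}{2}R|u|$, the reduction to a one-dimensional series over $\ell^\infty$ shells, and a split at the crossover $Rk\asymp\xi$. The only (cosmetic) difference is that you treat all $a>0$ uniformly via $\sum_{k\le N}k^{a-1}\lesssim N^a$ plus the tail estimate, whereas the paper cases on $a\le 1$ (integral comparison, no split needed) versus $a>1$ (the same crossover split you use).
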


The next proposition applies Lemma~\ref{lem:periodic_sum} to conclude bounds on the unwrapped two-point function $\Gbar_{R,\beta}(x)$.

\begin{proposition} \label{prop:Gamma}
Under Hypothesis~\ref{ass:G}, and assuming that
$\xi(\beta) \ge \frac{3}{2s_1} R$ for the lower bound in \eqref{eq:Gbar},
we have
\begin{equation} \label{eq:Gbar}
\Gamma_{R,\beta}(x) \asymp \frac 1 { \nnnorm x^{d-\alpha} } + \frac{ \chi(\beta) } V
\end{equation}
uniformly in $x \in \T_R^d$.
\end{proposition}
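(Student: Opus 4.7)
The plan is to split
\[
\Gamma_{R,\beta}(x) = G_\beta(x) + \sum_{u \in \Z^d \setminus \{0\}} G_\beta(x+Ru)
\]
for $x \in \T_R^d$ identified with its representative in $[-R/2,R/2)^d \cap \Z^d$, and to treat the diagonal ($u=0$) term and the off-diagonal tail separately. The common input that glues the two halves together is the identity $\chi(\beta) \asymp \xi(\beta)^\alpha$, which follows from $\chi \asymp t^{-\gamma}$, $\xi \asymp t^{-\nu}$, and the relation $\nu = \gamma/\alpha$ built into Hypothesis~\ref{ass:G}; this is what turns the $\xi^\alpha$ that naturally appears in the tail sum into the plateau value $\chi(\beta)/V$.

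For the upper bound, the $u=0$ term is at most $\nnnorm{x}^{-(d-\alpha)} g(|x|/\xi(\beta)) \lesssim \nnnorm{x}^{-(d-\alpha)}$ because $g$ is bounded. Applying Lemma~\ref{lem:periodic_sum} with $a=\alpha$ to the pointwise decay \eqref{eq:Gdecay} (the same $g$ satisfies the required power bound), the off-diagonal sum is at most $C \xi(\beta)^\alpha / R^d \asymp \chi(\beta)/V$. Adding the two estimates gives the upper bound in \eqref{eq:Gbar}.

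For the lower bound, the hypothesis $\xi(\beta) \ge \frac{3}{2s_1}R$ together with $|x| \le R/2$ gives $|x| \le R/2 < s_1\xi(\beta)$, so \eqref{eq:Glb} at $u=0$ yields $G_\beta(x) \gtrsim \nnnorm{x}^{-(d-\alpha)}$. For the tail, I would restrict the sum to the nonzero $u \in \Z^d$ with $|Ru| \le s_1\xi(\beta)-R/2$; the triangle inequality then gives $|x+Ru| \le s_1\xi(\beta)$, so \eqref{eq:Glb} applies and each such term is at least $c\,\xi(\beta)^{-(d-\alpha)}$. Since $\xi(\beta)/R \gtrsim 1$, a standard lattice-point count shows that the admissible $u$'s number $\asymp (\xi(\beta)/R)^d$, producing
\[
\sum_{u \neq 0} G_\beta(x+Ru) \;\gtrsim\; (\xi(\beta)/R)^d \cdot \xi(\beta)^{-(d-\alpha)} \;=\; \frac{\xi(\beta)^\alpha}{R^d} \;\asymp\; \frac{\chi(\beta)}{V}.
\]
Combining with the $u=0$ contribution completes the proof.

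The argument is essentially bookkeeping around Lemma~\ref{lem:periodic_sum}, and the only delicate point is that the quantitative hypothesis $\xi(\beta) \ge \frac{3}{2s_1} R$ (rather than merely $\xi(\beta) \gtrsim R$) is exactly what is needed both to keep every torus representative inside the \emph{critical regime} $|x| \le s_1\xi(\beta)$ where \eqref{eq:Glb} gives a pointwise lower bound, and to guarantee that the set of nonzero admissible $u$ in the tail is nonempty and of cardinality $\asymp (\xi(\beta)/R)^d$.
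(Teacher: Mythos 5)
Your proposal is correct and follows essentially the same route as the paper: split off the $u=0$ term, bound the tail above via Lemma~\ref{lem:periodic_sum} with $a=\alpha$ and $\xi^\alpha\asymp\chi$, and bound it below by applying \eqref{eq:Glb} to the representatives $x+Ru$ lying within the correlation length, the condition $\xi(\beta)\ge\frac{3}{2s_1}R$ guaranteeing a nonempty set of admissible $u$. The only cosmetic difference is that you bound each admissible term uniformly by $\xi^{-(d-\alpha)}$ and count $\asymp(\xi/R)^d$ lattice points, whereas the paper sums $(\tfrac32 R|u|)^{-(d-\alpha)}$ over $1\le|u|\le M$; both yield the same $\xi^\alpha/R^d\asymp\chi/V$.
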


\begin{proof}
The upper bound is a direct consequence of Lemma~\ref{lem:periodic_sum} (with $a=\alpha$) and Hypothesis~\ref{ass:G} ($\nu=\gamma/\alpha$ implies that $\xi^\alpha \asymp \chi$):
\begin{equation}
\Gamma_{R,\beta}(x)
\le G_\beta(x) + O\Big( \frac{ \xi^\alpha} {R^d} \Big)
\lesssim \frac 1 {\nnnorm x^{d-\alpha} } + \frac \chi V .
\end{equation}

For the lower bound, we restrict the sum in \eqref{eq:Gamma} to $| u| \le M$
with a well-chosen $M \ge 1$.
We then wish to apply \eqref{eq:Glb} to all $G_\beta(x +Ru)$ with $| u| \le M$.
Since $| x| \le R / 2$,
by the triangle inequality we can do so if
\begin{equation}\label{eq:proof1.11}
    M \ge 1,
\qquad
    |x + R u| \le \frac R 2 + R M \le s_1 \xi.
\end{equation}
We choose $M= \frac {2 s_1}{3} \xi / R$
so that both inequalities assert $R \le \frac {2s_1}{ 3}  \xi$, as we have assumed
to hold for the lower bound.
Since $|x+Ru| \le \frac R 2 +R | u| \le \frac 3 2 R | u|$, \eqref{eq:Glb} gives
\begin{align}
\sum_{u\ne 0} G_\beta (x+Ru)
&\gtrsim \sum_{1 \le | u| \le M} \frac 1 { | x+Ru |^{d-\alpha} } \nl
&\ge \sum_{1 \le | u| \le M} \frac 1 { ( \frac 3 2 R | u|  )^{d-\alpha} }
\asymp \frac{M^\alpha}{R^{d-\alpha}} .
\end{align}
Our choice $M= \frac {2s_1}{3} \xi / R$,
the lower bound \eqref{eq:Glb} on $G_\beta(x)$, and $\xi^\alpha \asymp \chi$, then give
\begin{equation}
\Gamma_{R,\beta}(x)
\gtrsim G_\beta(x) + \frac{ \xi^\alpha }{ R^d }
\gtrsim \frac 1 { \nnnorm x^{d-\alpha} } + \frac \chi  V .
\end{equation}
This completes the proof.
\end{proof}

\begin{proof}[Proof of Theorem~\ref{thm:main}]
We start by fixing $c_1$ large enough so that
the subtracted term in the lower bound of Hypothesis~\ref{ass:Gbar} is harmless at $\beta_*$.
Since $\chi(\beta) \asymp t^{-\gamma}$ by \eqref{eq:gamma_nu},
we can and do fix $c_1$ sufficiently large so that
\begin{equation}
c_0 \frac{  \chi(\beta_*)^{\dc/  2 } }V
\le c_0' \frac{  1 } { V t_*^{\gamma \dc / 2 } }
= \frac{  c_0' } { c_1^{\gamma \dc / 2 } }
\le \frac 1 2 .
\end{equation}
Concerning the restrictions $\xi(\beta_*) \ge s_2R$
and $\xi(\beta_*) \ge \frac{3}{2s_1} R$ required by Hypothesis~\ref{ass:Gbar}
and Proposition~\ref{prop:Gamma}, we
use $\xi(\beta) \asymp t^{-\nu} = t^{-\gamma/\alpha}$ by \eqref{eq:gamma_nu} and use $d > \dca$, to see that
for $R$ sufficiently large we have
\begin{equation}
\xi(\beta_*) \asymp \frac 1 {t_*^{\gamma/\alpha}}
= \frac {V^{\frac 2 {\alpha \dc}}} { c_1^{\gamma/\alpha}}
= \frac { R^{\frac d {\dca}}} { c_1^{\gamma/\alpha}}
\gg R .
\end{equation}
Therefore we can apply \eqref{eq:GGbar} and Proposition~\ref{prop:Gamma} at $\beta_*$.
Together, they give
\begin{equation}
G_{R,\beta_*}(x)
\asymp \Gamma_{R,\beta_*}(x)
\asymp \frac 1 { \nnnorm x^{d-\alpha} } + \frac{ \chi(\beta_*) } V
\end{equation}
uniformly in $R$ large and in $x \in \T_R^d$.

The plateau term is
\begin{equation}
\frac{\chi(\beta_*)}{V}
\asymp \frac 1 {V t_*^\gamma }
= \frac {V^{2/\dc}} {V c_1^\gamma }
= \frac 1 {c_1^\gamma V^{1-\frac 2 {\dc}}} ,
\end{equation}
as desired.
Finally, we sum $G_{R,\beta_*}(x)$ over $\T_R^d$ to get
\begin{equation}
\chi_R(\beta_*)
\asymp \sum_{x\in \T_R^d} \Big(
	\frac 1 {\nnnorm x^{d-\alpha}} + \frac 1 {V^{1-\frac 2 {\dc}}} \Big)
\asymp R^\alpha + V^{\frac 2 {\dc}}
\asymp V^{\frac 2 {\dc}} ,
\end{equation}
since $V^{\frac{2}{\dc}} = R^{\frac{2d}{\dc}}$ dominates
$R^\alpha$ when $d > \dca = \frac \alpha 2 \dc$.
This completes the proof.
\end{proof}

\section{Long-range self-avoiding walk}
\label{sec:SAW}

In this section we verify Hypotheses~\ref{ass:G} and~\ref{ass:Gbar}
for the spread-out LR self-avoiding walk (SAW)
with $\alpha\in (0,2)$ in dimensions $d > \dca = 2\alpha$.
This proves that Theorem~\ref{thm:main} applies in this setting,
with $\gamma=1$.

The self-avoiding walk model is defined as follows.
For infinite-volume, we fix a
lattice-symmetric, summable kernel $J: \Zd \to [0,\infty)$,
and we write $J_{x,y} = J(y - x)$ for all $x,y\in \Zd$.
A self-avoiding walk is a finite path
$\omega = (\omega(0),\omega(1),\ldots,\omega(|\omega|) )$ consisting
of $|\omega|$ steps,
with each $\omega(i)$ in $\Z^d$ and with $\omega(i)  \neq \omega(j)$ when $i\neq j$.
The two-point function is defined by
\begin{equation}
\label{eq:GSAW}
    G_\beta(x) =
    \sum_{\omega: 0 \to x}
    \beta^{|\omega|}\prod_{i=1}^{|\omega|}J_{\omega(i-1),\omega(i)},
\end{equation}
where the sum is over all self-avoiding walks from $0$ to $x$ of any length
(including the zero-step walk when $x=0$).
The torus two-point function $G_{R,\beta}(x)$ is defined similarly, with the sum over self-avoiding walks
on the torus and with $J_{x,y}$ replaced by the periodised kernel
\begin{equation} \label{eq:torus_coupling}
    \bar J_{R;x,y} = \sum_{u \in \Z^d} J_{x,y+Ru}
    \qquad (x,y\in \T_R^d).
\end{equation}

The spread-out LR model is defined by choosing $J$ to have the form
\begin{equation}
    J_{0,x} \asymp \frac{1}{L^d}\Big( \frac{1}{1+|x|/L} \Big)^{d+\alpha} ,
\end{equation}
where $L$ is a large parameter.
The reciprocal $L^{-1}$ provides a small parameter which permits expansion
methods to be used, specifically the lace expansion \cite{Slad06}.
In the following, we assume that $L$ is large enough, and verify
Hypotheses~\ref{ass:G} and~\ref{ass:Gbar} for the spread-out long-range SAW.

\begin{proof}[Verification of Hypothesis~\ref{ass:G}.]
It is proved in \cite{HHS08} that $\chi \asymp t^{-\gamma}$
with $\gamma=1$, using the lace expansion.
The critical two-point function is proved in \cite{CS15}, again via the lace expansion, to satisfy
$G_{\beta_c}(x) \asymp \nnnorm x^{-(d-\alpha)}$ (i.e., $\eta=2-\alpha$); Hypothesis~\ref{ass:G}
requires the latter to be enhanced to near-critical $\beta$.
For $\beta\in [\frac 12, \betac]$ and $x\in \Zd$, the near-critical estimate
\begin{equation} \label{eq:SAW_ub}
G_\beta(x) \le \frac {C_L} { \nnnorm x^{d-\alpha} } \frac 1 {(1 + \abs x (\betac - \beta)^{1/\alpha} / L  )^{2\alpha}}
\end{equation}
was proved recently \cite{Liu25}, once more using the lace expansion.
These substantial results establish \eqref{eq:Gdecay} with $g = g_{\mathrm{LR}}$ (with $c=1$)
and $\xi(\beta) = L(\betac-\beta)^{-1/\alpha}$,
which does satisfy $\xi \asymp t^{-\nu}$ with $\nu = 1/\alpha = \gamma / \alpha$.

It remains to verify \eqref{eq:Glb}.
We use the convolution $(f_1*f_2)(x) = \sum_{y \in \Z^d}f_1(x-y)f_2(y)$.
Since $G_\beta(0)=1$, we only need to consider $x\ne 0$.
For this, we begin with the differential inequality
\begin{equation} \label{eq:SAWdi}
\beta \frac \D {\D\beta} G_\beta(x)
\le (G_\beta * G_\beta)(x).
\end{equation}
This is proved by observing that the factor $|\omega|$
brought down by differentiating $\beta^{|\omega|}$ in \eqref{eq:GSAW} may be regarded
as a sum over nonzero points in the walk $\omega$.
Given a point on the walk,
neglecting the self-avoidance between
the parts of the walk before and after that point gives an upper bound, which results
in \eqref{eq:SAWdi} (see \cite[Lemma~1.5.2]{MS93} for details).

It follows from \eqref{eq:SAWdi}, from $G_\beta(x) \lesssim \nnnorm x^{-(d-\alpha)}$,
and from the elementary convolution
estimate Lemma~\ref{lem:G2} (in Appendix~\ref{app:conv})
that
\begin{equation} \label{eq:diff_ineq}
\beta \frac \D {\D\beta} G_\beta(x)
\lesssim \frac 1 { \nnnorm x^{d-2\alpha} } .
\end{equation}
Integration of the above from $\beta$ to $\betac$,
together with the power-law decay of the critical two-point function $G_{\betac}$, gives
\begin{align}
G_\beta(x) &= G_\betac(x) - ( G_\betac(x) - G_\beta(x))
\nonumber \\  &
\ge \frac{a}{\nnnorm{x}^{d-\alpha}} - \frac{A(\betac - \beta)}{\nnnorm{x}^{d-2\alpha}}
\end{align}
for some constants $a,A > 0$.
Since $\xi(\beta) \asymp (\betac - \beta)^{-1 / \alpha}$,
the subtracted term is at most
\begin{equation}
    \frac{1}{\nnnorm{x}^{d-\alpha}} \frac{A'|x|^\alpha}{\xi(\beta)^\alpha}
\end{equation}
with some constant $A'$,
and this is negligible compared to the main term $a|x|^{-(d-\alpha)}$
if $\nnnorm{x} \le \eps \xi(\beta)$ with an $\eps > 0$ sufficiently small.  This verifies \eqref{eq:Glb} with $s_1=\eps$.
\end{proof}

\begin{proof}[Verification of Hypothesis~\ref{ass:Gbar}.]
Let $\pi: \Zd \to \T_R^d$ denote the natural projection, which we also apply to $\omega$ component-wise. We first prove the upper bound, which asserts that
\begin{align}
G_{R,\beta}(x)
    &=\sum_{\omega_R: 0 \to x}
    \beta^{|\omega_R|}\prod_{j=1}^{|\omega_R|}\bar J_{R;\omega_R(j-1),\omega_R(j)}
    \nonumber \\ & \le
    \sum_{ \substack{ x'\in \Z^d \\ \pi(x')=x} }
    \sum_{\omega: 0 \to x'}
    \beta^{|\omega|}\prod_{j=1}^{|\omega|}J_{\omega(j-1),\omega(j)},
\end{align}
where the first sum is over torus SAWs $\omega_R$
and the second line involves a sum over $\Zd$ SAWs $\omega$.

By the definition of the torus kernel $\bar J_R$ in \eqref{eq:torus_coupling},
the torus product can be unwrapped to give
\begin{equation}
\prod_{j=1}^{\abs \omegaR} \bar J_{R; \omegaR(j-1), \omegaR(j)}
= \sum_{\omega} \1_{\{\pi(\omega) = \omegaR\}}
\prod_{j=1}^{\abs \omega} J_{\omega(j-1) \omega(j)} ,
\end{equation}
where the indicator function $\1_{\{\pi(\omega) = \omegaR\}}$ is $1$ if $\omega$ projects
to $\omega_R$ and otherwise is $0$.
We multiply by $\beta^{\abs \omegaR} = \beta^{\abs \omega}$ and sum over $\omegaR$
from $0$ to $x$. The walk $\omega$ must end at some $x'\in \Zd$ with $\pi(x') = x$,
so we get
\begin{equation}
G_{R,\beta}(x)
= \sum_{\substack{x' ,  \,  \omega:0\to x' \\ \pi(x')=x}}
	\1_{\{ \pi(\omega) \text{ is SA}\}}
	\beta^{\abs \omega}
	\prod_{j=1}^{\abs \omega} J_{\omega(j-1) \omega(j)} ,
\end{equation}
where SA denotes self-avoiding.
Since $\pi(\omega)$ is SA implies that $\omega$
is SA, by relaxing the indicator function we get
\begin{align}
\label{eq:SAWR}
G_{R,\beta}(x)
&\le \sum_{\substack{x',\omega:0\to x' \\ \pi(x')=x}}
	\1_{\{ \omega \text{ is SA}\}}
	\beta^{\abs \omega}
	\prod_{j=1}^{\abs \omega} J_{\omega(j-1) \omega(j)} \nl
&= \sum_{ x': \pi(x')=x} G_\beta(x')
= \Gamma_{R,\beta}(x) .
\end{align}
This proves the upper bound of \eqref{eq:GGbar}.

For the lower bound of \eqref{eq:GGbar},
our goal is to find constants $c_0,s_2>0$ such that
if $\xi(\beta) \ge s_2 R$ then
\begin{equation} \label{eq:GGbar2}
\Gbar_{R,\beta}(x)  - G_{R,\beta}(x) \le
 c_0\frac{\chi(\beta)^{\dc/2}}{V}  \Gbar_{R,\beta}(x)
 .
\end{equation}
To begin, we observe from
\eqref{eq:SAWR} that the difference $\Gamma_{R,\beta} - G_{R,\beta}$ arises from self-avoiding walks $\omega$ whose projection $\pi(\omega)$ is not self-avoiding.
Such walks must visit two distinct points $y,y'$ in $\Zd$ with $\pi(y)=\pi(y')$.
Diagrammatically, this is\; \includegraphics[scale = 0.6]{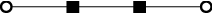}\; where
the hollow circles are $0,x'$ and the filled squares are $y,y'$.
By neglecting the mutual avoidance of the three diagram lines,
the difference $\Gbar_{R,\beta}(x) - G_{R,\beta}(x)$ is therefore bounded above by
\begin{equation} \label{eq:SAW_lb1}
\sum_{y \in \Z^d} G_\beta(y)
\sum_{ y' \neq y: \pi(y')= y,} G_\beta(y'-y)
\sum_{x' :\pi(x')= x}G_\beta(x'-y') .
\end{equation}
The last sum over $x'$ is exactly $\Gamma_{R,\beta}(x - \pi (y))$.
Then we can use \eqref{eq:SAW_ub} and Lemma~\ref{lem:periodic_sum} to bound the middle sum over $y'$ by $O(\xi^\alpha / V)$.
The remaining sum over $y$ is
\begin{equation} \label{eq:SAW_lb2}
\sum_{y\in \Zd}G_\beta(y) \Gamma_{R,\beta}(x-\pi ( y))
= \sum_{u\in \Zd} (G_\beta * G_\beta)(x+Ru).
\end{equation}
Whenever $G_\beta$ satisfies the decay bound \eqref{eq:SAW_ub}, the convolution
can be bounded using Lemma~\ref{lem:G2} in Appendix~\ref{app:conv}, which
gives $G_\beta * G_\beta (y) \lesssim \nnnorm y^{-(d-2\alpha)} (1+|y|/\xi)^{-3\alpha}$.
The $u = 0$ term is then bounded by $\nnnorm x^{-(d-2\alpha)}$.
When $u\neq 0$,  we apply Lemma~\ref{lem:periodic_sum} with $a=2\alpha$ and $\eps = \alpha$, to conclude that the sum over nonzero $u$ is $\lesssim \xi^{2\alpha}/V$.
Altogether, we obtain
\begin{align}
    \Gbar_{R,\beta}(x) - G_{R,\beta}(x)
    \lesssim
    \frac{\xi^\alpha }{V} \Big( \frac{1} { \nnnorm x^{d-2\alpha} } + \frac{ \xi^{2\alpha}} V \Big).
\end{align}
To conclude,
we take $s_2 = \max\{ \frac 1 2 , \frac 3 {2s_1} \}$,
and use $\nnnorm x^\alpha \le (\frac R 2)^\alpha \le (s_2 R)^\alpha \le \xi^\alpha$.
Since also $\xi^\alpha \asymp \chi$, we find that there is a constant $c_0'$ such that
\begin{align}
    \Gbar_{R,\beta}(x) - G_{R,\beta}(x)
    \le
    c_0'
    \frac{\chi^2}{V} \Big( \frac{1} { \nnnorm x^{d-\alpha} } + \frac{ \chi} V \Big).
\end{align}
By Proposition~\ref{prop:Gamma},
the above inequality gives the desired result \eqref{eq:GGbar2},
since $\dc = 4$.
\end{proof}

\section{Further applications of the plateau theorem}
\label{sec:applications}

The plateau theorem has been isolated here, and also applied to long-range
models, for the first time.
However, it has previously been used implicitly in multiple contexts.
We now summarise the relevant literature; the conclusions are
presented in Table~\ref{table:fss}.
(Logarithmic corrections at the critical dimension are discussed in \cite{MPS23,PS25,Kenna04,Ruiz98,Park25b} for SR models.)

\begin{table}[h]
\begin{ruledtabular}
\begin{tabular}{cccc}
& SAW/Spin & Percolation & BP
\\ \hline
$\dc$ & $4$ & $6$ & $8$
\\
$\dca$ & $2\alpha$ & $3\alpha$ & $4\alpha$
\\
$\gamma$ & $1$ & $1$ & $1/2$
\\
window & $V^{-1/2}$ & $V^{-1/3}$ & $V^{-1/2}$
\\
$\chi_R$ & $V^{1/2}$& $V^{1/3}$& $V^{1/4}$
\\
plateau & $V^{-1/2}$  & $V^{-2/3}$ & $V^{-3/4}$
\end{tabular}
\end{ruledtabular}
\caption{Finite-size scaling exponents for short-range models with $d>\dc$ and long-range models with $d>\dca = \frac{\alpha}{2}\dc$.
Rigorous results for long-range models are partial, as discussed in the text.
}
\label{table:fss}
\end{table}

The results concern the following symmetric
kernels $J$ on $\Z^d$
and $\bar J_R$ on $\T_R^d$ for short-range and long-range models:
\begin{itemize}
\item
SR: $J_{x,y}= 1$ for nearest-neighbour $x,y$,
and otherwise $J_{x,y}=0$.
\item
spread-out SR: $J_{x,y}=L^{-d}$ for $0 <  |x-y| \le L$, and otherwise $J_{x,y}=0$
($L$ is fixed and large);
\item
LR: $J_{x,y}\asymp |x-y|^{-(d+\alpha)}$ with $\alpha \in (0,2)$;
\item
spread-out
LR:
$J_{x,y} \asymp L^{-d} (1+|x-y|/L)^{-(d+\alpha)}$
 with $\alpha \in (0,2)$ for $d>1$ and $\alpha\in (0,1)$ for $d=1$
 ($L$ is fixed and large).
\item
torus versions:
$\bar J_{R;x,y} = \sum_{u \in \Z^d} J_{x,y+Ru}$.
\end{itemize}

\smallskip\noindent \textbf{Self-avoiding walk} ($\dc=4$).
The self-avoiding walk on $\Z^d$ has two-point function
defined by \eqref{eq:GSAW}, with any of the above choices of $J$.
The torus two-point function is defined with a sum over torus walks and with
$J$ replaced by $\bar J_R$.
Theorem~\ref{thm:main} has been proved in the following settings, using the lace
expansion:
\begin{itemize}
\item
SR (nearest-neighbour) SAW for $d > 4$ \cite{Slad23_wsaw, Liu25EJP};
\item
spread-out LR SAW for $d > 2\alpha$ in Section~\ref{sec:SAW}.
\end{itemize}

\smallskip\noindent \textbf{Branched Polymer (BP) models} ($\dc=8$).
Let $\mathcal{C}$ denote either a lattice animal (finite connected
bond cluster) or a lattice tree (acyclic animal), and let $|\mathcal{C}|$ denote the number of bonds in $\mathcal{C}$.
The infinite-volume two-point function is defined by
\begin{equation}
    G_\beta(x) = \sum_{\mathcal{C}\ni 0,x} \beta^{|\mathcal{C}|}
    \prod_{\{x,y\} \in \mathcal{C}}J_{x,y},
\end{equation}
with the product over all bonds in $\mathcal{C}$.
The torus two-point function $G_{R,\beta}(x)$ is defined by the sum over $\mathcal{C}$ in the torus, with $J$ replaced by $\bar J_R$.
Theorem~\ref{thm:main} has been proved for:
\begin{itemize}
\item
Spread-out SR BP for $d > 8$ \cite{LS25a}.
\end{itemize}

\smallskip\noindent \textbf{Percolation} ($\dc=6$).
Percolation is a probabilistic model
in which each bond $b$ is \emph{occupied} with probability
$1-e^{-\beta J_b}$ and vacant with probability $e^{-\beta J_b}$.  Bond occupations
are independent.  A configuration of occupied bonds forms
a subgraph of the complete graph on
$\mathbb Z^d$ (or $\T_R^d$), and we say $0 \leftrightarrow x$ if $0$ and $x$ are in the same
connected component of this subgraph.
The two-point functions $G_\beta(x), G_{R,\beta}(x)$ are defined by $\mathrm{Prob}_\beta(0  \leftrightarrow x)$ on $\mathbb Z^d$ or on the torus, respectively.
\begin{itemize}
\item
Theorem~\ref{thm:main} has been proved for
SR percolation for $d\ge 11$, and for spread-out SR percolation for $d > 6$.
Moreover, both asymptotic relations in
\eqref{eq:PBC} have been proved throughout the entire critical window $\abs t \lesssim V^{-1/3}$,
including the critical point and the supercritical regime \cite{HMS23}.
The verification of Hypothesis~\ref{ass:Gbar} is achieved via an exploration
process which couples percolation on $\mathbb Z^d$ and the torus \cite{HHI07}.
\item
For spread-out LR percolation,
the near-critical upper bound \eqref{eq:Gdecay} is proved in
\cite{Liu25} for $\alpha \in (0,2)$.  It is also proved
in \cite{Hutc25} for $\alpha \in (0,1)$ for LR percolation without requiring spread-out.
To extend this to prove Theorem~\ref{thm:main} for LR percolation would require  an
adaptation of the above-mentioned coupling to the LR setting.
\end{itemize}

\smallskip\noindent \textbf{Spin systems} ($\dc=4$).
The Ising model has spins $\varphi_x=\pm 1$
with the Hamiltonian
$H= - \sum_{x \sim y}J_{x,y}\varphi_x \varphi_y$ for $\mathbb Z^d$, and with
instead $\bar J_R$ for the torus.
We also consider the $|\varphi|^4$ model with $n$-component continuous spins $\varphi_x
\in \R^n$;
its Hamiltonian is discussed in Appendix~\ref{app:RG}.
The two-point functions $G_\beta(x), G_{R,\beta}(x)$ are defined by the expectation
$\frac 1n \langle\varphi_0\cdot \varphi_x\rangle_\beta$.
For the Ising model, $\beta$ is the inverse temperature.
For the $|\varphi|^4$ model, the role of $\beta$ is played by $-\nu$ where $\nu$ is
the quadratic coupling constant (see Appendix~\ref{app:RG}).
\begin{itemize}
\item
Theorem~\ref{thm:main} has been proved for the
SR Ising model for $d >4$ \cite{LPS25-Ising}.  The proof relies on the fact
that \eqref{eq:Gdecay}--\eqref{eq:Glb} are proved in \cite{DP25-Ising}.
The verification of Hypothesis~\ref{ass:Gbar}
uses a coupled exploration process and the random current representation
of the Ising model \cite{Aize82}.
\item
For the spread-out LR Ising model,
the upper bound \eqref{eq:Gdecay} is proved in \cite{Liu25}.
To extend this to prove Theorem~\ref{thm:main} would require
adaptation of the above-mentioned coupled exploration process to the LR setting.
\item
For the SR $1$-component $\varphi^4$ model, \eqref{eq:Gdecay}--\eqref{eq:Glb} are proved in \cite{DP25-Ising} for $d>4$.
For the torus,  a rigorous renormalisation group method has recently been used
to prove that for $d >4$ the bounds of Theorem~\ref{thm:main} hold exactly at the infinite-volume critical point for both
$\chi_R (\beta_c)$ and $G_{R,\beta_c}$  \cite{Park25b}, and with logarithmic corrections
for $d=4$.
Stronger results have been proved on the hierarchical lattice for $d\ge4$ \cite{MPS23,PS25}.
We discuss this
in more detail in Section~\ref{sec:profile}.
\end{itemize}

Some ideas behind the proofs of the lower bound of \eqref{eq:GGbar} for
BP, percolation, and the Ising model are sketched in Appendix~\ref{app:LB}.

\section{Universal profiles}
\label{sec:profile}

Theorem~\ref{thm:main} shows that the torus two-point function has a plateau
provided Hypotheses~\ref{ass:G} and~\ref{ass:Gbar} are satisfied.
However, it does not provide precise information on the behaviour of the
two-point function or susceptibility within the critical window.
We now present two conjectures on the behaviour of the torus susceptibility
and plateau throughout the critical window.
We parametrise the window using
\begin{equation}
    \bs = \betac + s a_d V^{-\frac{2}{\gamma\dc}},
\end{equation}
for $s\in \R$, for some non-universal positive constant $a_d$, and with $\betac$
the infinite-volume critical point.
We write $f\sim g$ to mean $\lim f/g = 1$.

\begin{conjecture} \label{conj:profile}
Let $d > \dca$ and $s\in \R$.
For each of the models (SAW, Ising, $|\varphi|^4$, percolation, BP),
there is a model-dependent profile function $f: \R \to (0,\infty)$
(the \emph{same} profile for SR and LR) and positive constants $a_d,b_d$,
such that,  as $R \to \infty$,
\begin{equation} \label{eq:chi_profile}
\chi_R(\bs)
\sim  b_d f(s) V^{\frac 2 {\dc}}.
\end{equation}
Also, as $R \to \infty$,
for every $x\in \T_R^d$,
\begin{equation}
G_{R, \bs}(x) \sim  G_{\betac} (x)
+  \frac{ b_d f(s) }{ V^{1 - \frac 2 {\dc}} } .
\end{equation}
\end{conjecture}

For $k >-1$ and $s\in \R$, let
\begin{equation} \label{eq:Ikdef}
	I_k (s) = \int_{0}^{\infty} x^{k} e^{-\frac 14 x^4  -   \frac 12 s x^2}
    \mathrm d x .
\end{equation}
Also, we define
\begin{equation}
f_{\rm perc}(s)
=   \int_0^\infty
	\frac{\Psi(x^{3/2})}{\sqrt{2\pi x}} e^{-\frac 1 6 x^3 + \frac 1 2 sx^2 - \frac{ 1} 2 s^2 x} \mathrm dx ,
\end{equation}
where $\Psi(z) = \mathbb{E} \exp\{z\int_0^1 W^*(t)\mathrm dt\}$ is the moment generating function of the Brownian excursion area.

\begin{conjecture} \label{conj:profile-f}
The profile functions $f(s)$ are given by the functions in Table~\ref{table:profile},
for both SR and LR models for all $d \ge \dca$ (including the upper critical dimension).
\end{conjecture}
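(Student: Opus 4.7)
The plan is to reduce the identification of each profile $f(s)$ to an explicit computation on an associated mean-field reference model, and then verify by direct calculation that the reference-model profile is exactly the $I_k$-based expression (or $f_{\rm perc}$) in the conjecture. The reference models are the complete graph on $V$ vertices for SAW, BP, and percolation, and the hierarchical $|\varphi|^4$ model (or equivalently its zero-mode effective theory) for spin systems. The matching is motivated by the Ginzburg heuristic that for $d > \dc$ the torus is in a mean-field regime throughout the critical window of width $V^{-2/(\gamma \dc)}$, and the universality of the profile (same for SR and LR) then reflects that the reference model is insensitive to $\alpha$, consistent with the fact that only $\dc$ appears in the exponents of Theorem~\ref{thm:main}.

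For spin systems, the profile derivation would follow the Brezin--Zinn-Justin renormalisation group computation reviewed in Appendix~\ref{app:RG}: integrating out the non-zero Fourier modes reduces the torus $|\varphi|^4$ model to a zero-mode integral with effective action $\frac{1}{4}x^4 + \frac{1}{2}sx^2$, whose moments are by definition ratios of the $I_k(s)$ in \eqref{eq:Ikdef}. To make this rigorous on $\Zd$, one would mimic the hierarchical argument of \cite{MPS23, PS25}, using the near-critical estimates of \cite{DP25-Ising, LPS25-Ising} to Gaussianise the non-zero modes. For SAW and BP, the plan is to refine the unwrapping estimate of Theorem~\ref{thm:main} from $\asymp$ to $\sim$ by identifying the leading constant: in the critical window the typical walk (resp.\ cluster) has length $|\omega| \asymp V^{1/2}$ (resp.\ size $|\mathcal{C}| \asymp V^{1/4}$), and one would show that the rescaled length/size distribution converges to a universal limit with density proportional to $x^k e^{-x^4/4 - sx^2/2}$, so that the $I_k$ ratios emerge as moments. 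For percolation, one would strengthen the coupled exploration of \cite{HMS23, HHI07} to an Aldous-type scaling limit on the torus, so that the suitably rescaled exploration process of a cluster converges to a Brownian excursion with drift; $f_{\rm perc}(s)$ is then the moment generating function of the area of this excursion, which is exactly the role played by $\Psi$ in the stated formula.

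The main obstacle in every case is upgrading the two-sided scaling bounds $\asymp$ of Theorem~\ref{thm:main} to sharp asymptotic equivalences $\sim$, which requires control of fluctuations and not merely of orders of magnitude. For spin systems on $\Zd$ this means establishing a quantitative central-limit-type statement for the non-zero-mode sector uniformly throughout the window, something currently achieved only on the hierarchical lattice. For SAW and BP it requires a lace-expansion-based invariance principle for the walk length (resp.\ tree size) on the torus, which is genuinely stronger than the plateau theorem. For percolation it requires a torus extension of the Aldous Brownian-excursion scaling limit for the Erd\H{o}s--R\'enyi random graph, together with coupling control to transfer it from the complete graph to $\T_R^d$ for $d>\dc$. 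Each of these constitutes the principal open problem embedded in Conjecture~\ref{conj:profile-f}, well beyond what the unwrapping method of Theorem~\ref{thm:main} currently delivers.
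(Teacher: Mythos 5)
The statement you are addressing is a conjecture, and the paper offers no proof of it: its support consists of (i) the Br\'ezin--Zinn-Justin renormalisation group zero-mode computation of $f_n$ for the $|\varphi|^4$ model reviewed in Appendix~\ref{app:RG}, (ii) the rigorous hierarchical-lattice results of \cite{MPS23,PS25}, and (iii) complete-graph computations of the profiles (Curie--Weiss $f_1$ \cite{EL10}, SAW $I_1$ \cite{MPS23}, BP $I_0$ \cite{LS25_profile}, and the Erd\H{o}s--R\'enyi-based conjecture for $f_{\rm perc}$ \cite{LPS25-Ising}). Your proposal is therefore not, and cannot be, checked against a proof in the paper; what it is, is a research programme, and you yourself concede in the final paragraph that each of its decisive steps is open. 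Concretely, the passage from the two-sided bounds $\asymp$ of Theorem~\ref{thm:main} to the sharp asymptotics $\sim$ with the stated profile is exactly the content of Conjectures~\ref{conj:profile} and~\ref{conj:profile-f}; asserting that one ``would show'' a Gaussianisation of the non-zero modes on $\Z^d$, a torus invariance principle for the walk length or tree size, or an Aldous-type excursion limit for torus percolation clusters, does not constitute an argument, and no mechanism is offered for any of these beyond naming them. So there is a genuine gap: the proposal identifies the correct reference models and the correct targets, but supplies no proof of the statement.

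That said, the route you sketch is faithful to the evidence the paper itself assembles: the zero-mode reduction for spins is precisely the Appendix~\ref{app:RG} calculation, the complete-graph identifications of $I_1$, $I_0$, $f_1$ and $f_{\rm perc}$ are the ones cited in Section~\ref{sec:profile}, and the expectation that the profile is insensitive to $\alpha$ (same for SR and LR) matches the paper's reasoning that only the short-range $\dc$ enters the window, susceptibility, and plateau scales. One caution on a detail: at the upper critical dimension $d=\dca$, which the conjecture explicitly includes, the paper expects the same profile but with logarithmic corrections to the window, susceptibility, and plateau scales (as proved hierarchically for $d=4$ in \cite{MPS23,PS25}); your mean-field matching heuristic, which is phrased for $d>\dc$ via the Ginzburg criterion, would need a separate argument to cover the marginal case.
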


\begin{table}[h]
\begin{ruledtabular}
\begin{tabular}{cccc}
SAW & Spin ($n \ge 1$ components) & Percolation & BP
\\ \hline
$I_1$& $I_{n+1}/(nI_{n-1})$ & $f_{\mathrm{perc}}$ & $I_0$
\end{tabular}
\end{ruledtabular}
\caption{Conjectured universal profile for short-range models with
$d\ge \dc$
and long-range models with $d\ge \dca$.
}
\label{table:profile}
\end{table}

For the $n$-component
$|\varphi|^4$ model with $d>4$,
the profile
\begin{equation}
\label{eq:fndef}
    f_n(s)= \frac{I_{n+1}(s)}{nI_{n-1}(s)}
\end{equation}
is computed in  \cite{BZ85,Zinn21}.
That computation uses a renormalisation group (RG) analysis
which we recall in Appendix~\ref{app:RG}.  A rigorous RG analysis confirms
this profile $f_n$ for the weakly-coupled $n$-component
$|\varphi|^4$ model on a hierarchical lattice \cite{MPS23,PS25},
including for $d=\dc=4$ with logarithmic corrections (see Appendix~\ref{app:coppa}).
More generally, at the upper critical dimension $\dca$, we expect
Conjecture~\ref{conj:profile} to continue to hold with the \emph{same} profile
function $f(s)$, but with logarithmic corrections in the window, susceptibility, and plateau scales.

Despite the fact that $f_n$ was first computed as long ago as 1985,
the existence and form of profile functions in the scaling window have received
scant attention in the finite-size scaling
literature---for example it is not explicitly mentioned in any of
\cite{LB97,JY05,BKW12,WY14,F-SBKW16,GEZGD17,KB17,ZGFDG18,BEHK22,DGGZ22,FMPPS23,DGGZ24,LM16}.
One of our goals in this paper is to draw renewed attention to $f_n$ and other
profiles.

Some of the conjectured profiles have been proven to apply to models defined on the complete
graph ($V$ nodes with edges connecting each pair of nodes).
The $n=1$ profile $f_1$ agrees with the profile for the Curie--Weiss model
(Ising model on complete graph) \cite{EL10}.
The common profile function for lattice trees and lattice animals has been
proven to be $I_0$ on the complete graph
\cite{LS25_profile}.
Integration by parts in the denominator of \eqref{eq:fndef} leads to the formula
\begin{equation}
    f_n(s) = \frac{I_{n+1}(s)}{I_{n+3}(s) + sI_{n+1}(s)} ,
\end{equation}
which extends
the definition of $f_n$ to all $n > -2$.
The $n=0$ case reduces to $f_0(s) = I_1 (s)$ (since
$I_3(x)+sI_1(s)=1$) as in Table~\ref{table:profile};
this is the profile for SAW on the complete graph \cite[Appendix~B]{MPS23}
and is consistent with SAW being the $n=0$ spin model \cite{Genn72}.
The limiting case $f_{-2}(s)=s^{-1}$ is the correct profile for a Gaussian model.
Percolation on the complete graph is the Erd\H{o}s--R\'{e}nyi random graph. 
Its profile function has been proved to be $f_{\mathrm{perc}}$
in \cite[Corollary~3.9]{BRBN26} (see \cite[Appendix~C]{LPS25-Ising} for the identification of the profile function with $f_{\mathrm{perc}}$).
At the moment of final revision of our paper, equation \eqref{eq:chi_profile} in Conjecture~\ref{conj:profile} has been proved for percolation on the high-dimensional torus \cite[Theorem 2.1]{BRN26} for SR spread-out models in dimensions $d >6$, and for
the nearest-neighbour model for $d$ much greater than $6$.

\section{Free boundary conditions}
\label{sec:fbc}

Models with free boundary conditions are formulated in a
discrete $d$-dimensional box $\Lambda_{R,d}$ of side
$R$ and volume $V=R^d$.
They are defined by restricting a kernel $J_{x,y}$ on $\Z^d$ to satisfy
$J_{x,y}=0$ if $x$ or $y$ is not in the box $\Lambda_{R,d}$.
We use the superscript $\free$ to indicate FBC.
Unlike periodic boundary conditions (the torus)
there is no wrapping.

For FBC, the finite-size scaling at the infinite-volume critical point is conjectured---and
in some cases proved---to be different from PBC, namely that
in dimensions $d> \dca$, the susceptibility and two-point function obey
\begin{equation} \label{eq:FBC}
\chi_R^\free(\betac) \asymp R^\alpha,
\qquad
G_{R, \betac}^\free(0,x) \asymp \frac 1 {\abs x^{d-\alpha}} ,
\end{equation}
with $x$ away from the box boundary for
the latter. 
(Since translation invariance is broken, we write both points in $G_{R, \betac}^\free(0,x)$;
the origin is the centre of the box.)
In contrast with \eqref{eq:PBC_lb} for PBC,
there is no plateau in \eqref{eq:FBC}.
This has been proved for the SR Ising model \cite{CJN21} (with the hypothesis supplied by \cite{DP25-Ising,Saka07}) and SR percolation \cite{CH20}.

We conjecture that the $V^{-\frac 2 {\gamma \dc}}, V^{\frac 2 {\dc}}, V^{-(1-\frac 2 {\dc} )}$ scales of window, susceptibility, and plateau, and also the profile functions
of Table~\ref{table:profile}, are \emph{exactly} duplicated at a pseudocritical point $\beta_{R,\mathrm{c}}$ shifted into the ordered phase as
$\beta_{R,\mathrm c} = \betac + \text{const} R^{-1/\nu}$.
(This shift has been observed previously by other authors, e.g., \cite{WY14}.)
Explicitly, with a window around $\beta_{R,\mathrm{c}}$ parametrised by
\begin{equation}
    \bRs = \beta_{R,\mathrm c} + s a_d V^{-\frac{2}{\gamma\dc}},
\end{equation}
we have the following conjecture.  Its constants $a_d$ and $b_d$ are not the same as for PBC.  The restriction $|x|/R\to 0$ for \eqref{eq:FBCprofile} is present to account for
possible $x$ dependence in the profile for $x$ on the scale of $R$.
As noted in \eqref{eq:profilewins}, the profile dominates the Gaussian decay
as soon as $|x| \gg  R^p$ with $p = \frac{d-\alpha \frac{d}{\dca}}{d-\alpha} <1$,
so the restriction $|x|/R\to 0$ still includes a large part of the box where
the universal profile would apply.

\begin{conjecture} \label{conj:FBC}
Let $d > \dca$ and $s\in \R$.
For each of the models (SAW, Ising, $|\varphi|^4$, percolation, BP),
there is a pseudocritical point $\beta_{R,\mathrm c} = \betac + {\rm const}R^{-1/\nu}$ such that, as $R \to \infty$,
\begin{equation}
\chi_R^\free(\bRs )
\sim  b_d f(s) V^{\frac 2 {\dc}}
\end{equation}
with the \emph{same} profile function $f$ from Table~\ref{table:profile}.
Also, as $R \to \infty$, for $x\in \Lambda_{R,d}$ with $|x|/R \to 0$,
\begin{equation}
G^\free_{R, \bRs}(0,x) \sim G_{\betac} (x)
+  \frac{ b_d f(s) }{ V^{1 - \frac 2 {\dc}} } .
	\label{eq:FBCprofile}
\end{equation}
\end{conjecture}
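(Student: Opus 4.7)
The plan is to mirror the PBC strategy of Theorem~\ref{thm:main}, replacing the unwrapping comparison by a bulk-versus-boundary comparison between the FBC system and the infinite-volume system evaluated at the shifted point $\bRs$. The guiding intuition is that the shift $\beta_{R,\mathrm c} - \betac \asymp R^{-1/\nu}$ places the infinite-volume correlation length at the box scale, $\xi(\beta_{R,\mathrm c}) \asymp R$. At this point, $G_{\bRs}(x)$ still displays its critical power law $|x|^{-(d-\alpha)}$ throughout the region $|x|/R \to 0$, while the global ``zero mode'' of the FBC system (block spin, total cluster size, or total walk mass) becomes an effectively one-dimensional collective degree of freedom whose effective potential produces the profile $f$.

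First I would identify $\beta_{R,\mathrm c}$ intrinsically, for instance as the value of $\beta$ at which $\chi_R^\free$ or its $\beta$-derivative attains a specified extremum, and then show that such a $\beta_{R,\mathrm c}$ exists uniquely for large $R$ and obeys $\beta_{R,\mathrm c} - \betac \asymp R^{-1/\nu}$. Next I would establish an FBC analog of Hypothesis~\ref{ass:Gbar}, schematically
\begin{equation*}
G_{R,\beta}^\free(0,x) = G_\beta(x) + \frac{b_d f(s)}{V^{1-2/\dc}}\bigl(1+o(1)\bigr),
\end{equation*}
uniformly in a window of width $V^{-2/(\gamma\dc)}$ about $\beta_{R,\mathrm c}$ and in $x$ with $|x|/R \to 0$. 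Combined with Hypothesis~\ref{ass:G} at $\bRs$, this yields \eqref{eq:FBCprofile} and the susceptibility scaling by summing over $x$. The profile $f$ itself would be identified, following Appendix~\ref{app:RG}, by integrating out the non-constant modes via a renormalisation group step and recognising the resulting effective one-dimensional measure on the order parameter as the integral defining $f$ in Table~\ref{table:profile}; universality across SR/LR and across $d \ge \dca$ then reflects the Gaussian character of the effective measure after a single scale step, with coefficients determined only by the scaling exponents.

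The hard part will be controlling boundary effects. Unlike the torus, FBC breaks translation invariance, so the global magnetisation is not an exact mode of the Hamiltonian and the reduction to a single effective degree of freedom is not clean; in particular, the shift $R^{-1/\nu}$ must arise rigorously from a quantitative comparison between surface and bulk contributions to the free energy. The hierarchical analyses \cite{MPS23,PS25} succeed because the ultrametric block structure aligns exactly with the RG flow, while on the complete graph there is no geometry at all. Transferring the mechanism to the Euclidean lattice, where block spins and their effective couplings must be defined and tracked by hand through an RG flow without ultrametricity, likely requires a full near-critical RG analysis of the FBC system. A secondary issue is the restriction $|x|/R \to 0$: for $x$ comparable to the box size the profile is expected to acquire an $x/R$-dependent boundary modulation that must be separated from the universal bulk profile before the identification with the complete-graph $f$ can be closed.
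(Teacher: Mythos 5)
The statement you are addressing is a conjecture: the paper offers no proof of Conjecture~\ref{conj:FBC}, only supporting evidence --- the rigorous hierarchical-lattice results of \cite{MPS23,PS25} and the renormalisation-group heuristics of Appendix~\ref{app:RG} (which compute the profile in the PBC setting, where the zero-mode decomposition is exact). Your proposal is likewise not a proof but a research outline, and its central steps are exactly the open content of the conjecture. The schematic ``FBC analog of Hypothesis~\ref{ass:Gbar}'' that you posit, namely $G^\free_{R,\beta}(0,x) = G_\beta(x) + b_d f(s) V^{-(1-2/\dc)}(1+o(1))$ uniformly in the window, is not a comparison principle in the sense of the paper: the PBC argument works because the torus two-point function can be bounded above and below by the unwrapped function $\Gbar_{R,\beta}$, and the plateau then comes \emph{from infinite-volume scaling of the unwrapped model} via Proposition~\ref{prop:Gamma}. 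Under FBC there is no wrapping, and indeed at $\betac$ there is provably \emph{no} plateau (\eqref{eq:FBC}, \cite{CJN21,CH20}); the plateau must be generated by the shift into the ordered phase, a mechanism with no counterpart in the proof of Theorem~\ref{thm:main}. So ``mirroring the PBC strategy'' cannot work as stated --- the key lemma you would need has no identified proof route, and you do not supply one.

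The two remaining ingredients are in the same position. The existence of a pseudocritical point with shift $\beta_{R,\mathrm c}-\betac \asymp R^{-1/\nu}$, and the statement that the full window-susceptibility-plateau structure is duplicated there, is precisely what is proved \emph{only} for the hierarchical $|\varphi|^4$ model in \cite{MPS23,PS25}, where the block structure makes the boundary-induced effective mass shift computable scale by scale; you correctly note that transferring this to $\Z^d$ ``likely requires a full near-critical RG analysis of the FBC system,'' but that concession is an acknowledgement that the step is missing, not an argument. Similarly, identifying the profile $f$ by ``integrating out the non-constant modes'' presupposes a controlled reduction to a single collective degree of freedom, which for FBC is not an exact mode and is not currently achievable rigorously for SAW, percolation, or BP at all (for those models even the PBC profile of Conjecture~\ref{conj:profile} is open, known only on the complete graph \cite{EL10,LS25_profile,LPS25-Ising}). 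In short, your outline is a reasonable description of what a future proof might look like, and it is broadly consistent with the evidence the paper itself cites, but each of its three pillars (pseudocritical point, FBC comparison/plateau mechanism, profile identification) is asserted rather than established, so the conjecture remains exactly as open after your argument as before it.
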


Since $d > \dca = \frac \alpha 2 \dc$,
the shift $R^{-\frac{1}{\nu}} =V^{-\frac{\alpha}{\gamma d}}$
is larger than the window width  $V^{-\frac{2}{\gamma\dc}}$, and the PBC and FBC windows do not overlap.
This shifting of the PBC scaling behaviour to
the FBC pseudocritical point was observed numerically for the 5D SR Ising model
and is discussed in \cite{BEHK22,WY14}
(contrary to the conclusions of \cite{LM16}).

Conjecture~\ref{conj:FBC}, and the Gaussian scaling $\chi_R^\free(\betac) \asymp R^2$ at the infinite-volume critical point,
have been proved for the hierarchical $n$-component
$\abs \varphi^4$ model in \cite{MPS23,PS25}.
This is true also for $d =\dc=4$ with a logarithmic correction to the shift,
which instead of $R^{-1/\nu}=R^{-1/2}$ becomes $R^{-1/2}(\log R)^{\frac{n+2}{n+8}}$.
In general, for all models discussed above, at their upper critical dimension
we expect logarithmic corrections to the window, the susceptibility, and the plateau
to be identical for the (non-overlapping) FBC and PBC windows.

\section{Conclusion}

We have developed a unified theory for finite-size scaling under periodic boundary conditions above the upper critical dimension, and have demonstrated that polymer models, percolation, and spin systems all fit into this general framework.
The theory computes exponents for the scaling window, the critical susceptibility,
and the two-point function plateau.
In our work, a larger-than-system correlation length occurs as the correlation
length of an unwrapped
model.
We propose conjectures for the precise behaviour of the models inside the critical window,
in which the susceptibility and the torus plateau are governed by universal profiles.
These profiles are computed on the complete graph and for a hierarchical model,
and are conjectured to also apply to short- and long-range models with PBC.
Under free boundary conditions, the same behaviour and the same universal profiles
are conjectured to apply in a shifted critical window around a pseudocritical point.

\section*{Acknowledgements}
We thank our collaborators on aspects of this work:
Tom Hutchcroft, Emmanuel Michta, Romain Panis.
The work of YL and GS was supported in part by NSERC of Canada.
YL and GS thank the Isaac Newton Institute for Mathematical Sciences, Cambridge, for support and hospitality during the programme \emph{Stochastic systems for anomalous diffusion}, where work on this paper was undertaken; this work was supported by EPSRC grant EP/R014604/1.  The work of GS was supported in part by the Clay
Mathematics Institute.
JP thanks the Seoul National University for hospitality during
a visit when this work was partially undertaken.
JP was supported by Basic Science Research Program through the National Research Foundation of Korea funded by the Ministry of Science and ICT (RS2025-00518980).


%

\appendix

\section{Elementary lemmas}
\label{app:conv}

This appendix contains two lemmas.
The first is a restatement of Lemma~\ref{lem:periodic_sum}, with proof. The proof
uses the simple observation that for $x\in \T_R^d$
(identified with a point in $\Z^d$ with $|x| \le R/2$) and for
nonzero $u\in \Zd$, the points $x+Ru$ and $u$ are simply related by
\begin{equation} \label{eq:xubd}
\begin{aligned}
|x+Ru|
&\ge R|u| - \frac R 2
\ge \frac R 2 |u|,
\\
|x+Ru|
&\le R | u| + \frac R 2
\le \frac 3 2 R | u|.
\end{aligned}
\end{equation}

\begin{lemma} \label{lem:periodic_sum-app}
Let $d \ge 1$, $R \ge 1$,
$a >0$ and $\xi >0$.
Suppose $g:[0,\infty) \to [0,\infty)$ satisfies $g(s) \le (1+s)^{-(a+\eps)}$ for some $\eps > 0$.
Then there is a constant $C = C(d,a,\eps)>0$ such that
\begin{equation*}
\sum_{u \in\Z^d : u \neq 0}
\frac{1}{|x + R u|^{d-a}} g \bigg( \frac{ |x+Ru| } \xi \bigg)
\le C \frac{ \xi^a }{R^d}
\end{equation*}
for all $x\in \Zd$ with $| x| \le R/2$.
\end{lemma}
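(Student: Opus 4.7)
My plan is a direct dyadic decomposition of the sum, entirely elementary once the two-sided inequality \eqref{eq:xubd} is in hand. First I would apply $|x+Ru| \ge R|u|/2$ (valid for $u\neq 0$ and $|x|\le R/2$) together with the monotonicity of $(1+s)^{-(a+\eps)}$ in $s$ to bound each summand by $2^{d-a}(R|u|)^{-(d-a)}(1+R|u|/(2\xi))^{-(a+\eps)}$. This reduces the problem to controlling the purely lattice-sum
\begin{equation*}
S \;:=\; \sum_{u\in \Z^d\setminus\{0\}} \frac{1}{|u|^{d-a}} \left(1 + \frac{R|u|}{2\xi}\right)^{-(a+\eps)},
\end{equation*}
after which the lemma follows from $\text{(LHS)} \le 2^{d-a} R^{a-d} S$.

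The natural crossover scale is $M := 2\xi/R$, and I would split $S$ accordingly. For $1\le|u|\le M$, bound the factor $(1+\cdot)^{-(a+\eps)}$ trivially by $1$ and use standard lattice shell counting: the shell $|u|\asymp 2^k$ contributes $\asymp 2^{kd}\cdot 2^{-k(d-a)}=2^{ka}$, so summing geometrically in $k$ from $0$ up to $\asymp \log_2 M$ gives $O(M^a)$ (since $a>0$, no logarithmic correction arises, even in the edge case $a=d$). For $|u|>M$, use the elementary inequality $(1+s)^{-(a+\eps)}\le s^{-(a+\eps)}$ to replace the factor by $(2\xi/(R|u|))^{a+\eps}$, so that
\begin{equation*}
\sum_{|u|>M}\frac{1}{|u|^{d-a}}\Big(\frac{2\xi}{R|u|}\Big)^{a+\eps}
= \Big(\frac{2\xi}{R}\Big)^{a+\eps}\sum_{|u|>M}\frac{1}{|u|^{d+\eps}}
\lesssim \Big(\frac{2\xi}{R}\Big)^{a+\eps}M^{-\eps}
\asymp \Big(\frac{\xi}{R}\Big)^{a}.
\end{equation*}
Both regimes therefore yield $S\lesssim (\xi/R)^a$, hence the desired bound $\text{(LHS)}\lesssim \xi^a/R^d$.

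The only minor subtlety is the edge case $M<1$, i.e.\ $\xi<R$: the near-regime sum is then empty, and the far-regime computation instead gives $\xi^{a+\eps}/R^{d+\eps}$, which still beats $\xi^a/R^d$ thanks to the favourable factor $(\xi/R)^\eps\le 1$. There is no real obstacle in the argument; the constant $C(d,a,\eps)$ depends only on these parameters through the two geometric sums and the convergence of the tail, and the role of the $\eps>0$ excess decay is precisely to make the far-regime sum summable at the scale $M$.
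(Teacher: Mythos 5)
Your proposal is correct and is essentially the paper's own argument: both use \eqref{eq:xubd} and the decay of $(1+s)^{-(a+\eps)}$ to reduce the sum to a radial lattice sum of $|u|^{a-1}$-type weighted by $g(R|u|/(2\xi))$, split at the crossover scale $|u|\asymp \xi/R$ so that the near regime contributes $O((\xi/R)^a)$ and the far regime $O((\xi/R)^{a+\eps}(\xi/R)^{-\eps})$ via the $\eps$ excess decay, with the same treatment of the edge case $\xi \lesssim R$. The only cosmetic difference is that you handle all $a>0$ uniformly with one dyadic split, whereas the paper separates the cases $0<a\le 1$ (integral comparison) and $a>1$.
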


\begin{proof}
We may assume that $g(s) = (1+s)^{-(a+\eps)}$.
By \eqref{eq:xubd} and the monotonicity of $g$
it is enough to obtain an upper bound of order $\xi^a/R^d$ for
\begin{align}
&\sum_{u \in\Z^d : u \neq 0}
\frac{1}{ ( \frac R 2 | u| )^{d-a}}
 g \bigg( \frac R {2\xi} | u| \bigg) 	\nl
&\qquad \lesssim
\frac{ 1}{ R^{d-a} }
\sum_{N=1}^\infty N^{a-1}
 g \bigg( \frac R {2\xi} N \bigg) .
\end{align}
If $0 < a \le 1$,
the summand is decreasing in $N$, so we can bound the sum by the integral
\begin{align}
\int_0^\infty t^{a-1}  g \bigg( \frac R {2\xi} t \bigg) \D t
&= \bigg( \frac{2\xi}R \bigg)^{a}
	\int_0^\infty y^{a-1}  g (y) \D y 	\nl
&= C_{a,\eps} \frac{ \xi^a }{ R^a },
\end{align}
using $a > 0$ and $\eps > 0$.

If $a > 1$, by the definition of $g$ it suffices to bound
\begin{equation}
\sum_{N=1}^{ \lfloor 2\xi / R \rfloor} N^{a-1}
+
\sum_{N= \max\{\lceil 2\xi / R \rceil , 1\}}^\infty N^{a-1} \bigg( \frac R {2\xi} N \bigg)^{-(a+\eps)} .
\end{equation}
The first term is $O(\xi^a / R^a)$ since $a> 1$.
The second term is a multiple of
\begin{equation}
\label{eq:age1}
    \frac{\xi^{a+\varepsilon}}{R^{a+\varepsilon}}
    \sum_{N= \max\{\lceil 2\xi / R \rceil , 1\}}^\infty N^{-1-\eps}.
\end{equation}
If the maximum in the lower summation limit is $1$ then $\xi/R \le 1$ and \eqref{eq:age1}
is of order $(\xi/R)^{a+\varepsilon} \le (\xi/R)^a$.
If instead the maximum is $\lceil 2\xi / R \rceil$ then \eqref{eq:age1} is bounded
by a multiple of $(\xi/R)^{a+\eps}(\xi / R)^{-\varepsilon} = (\xi/R)^{a}$.
This completes the proof.
\end{proof}

The second lemma is an elementary convolution estimate,
which is applied in Section~\ref{sec:SAW} to verify Hypothesis~\ref{ass:Gbar} for the spread-out LR SAW.

\begin{lemma} \label{lem:G2}
Let $\alpha > 0$, $d > 2\alpha$, and $\xi \in [1, \infty]$.
Suppose $G: \Zd \to [0,\infty)$ obeys
\begin{equation}
G(x) \lesssim \frac 1 { \nnnorm x^{d-\alpha} } \frac 1 {(1 + \abs x / \xi)^{2\alpha}} .
\end{equation}
Then there is a constant independent of $\xi$ such that
\begin{equation}
(G*G)(x) \lesssim \frac 1 { \nnnorm x^{d-2\alpha} } \frac 1 {(1 + \abs x / \xi)^{3\alpha}} .
\end{equation}
\end{lemma}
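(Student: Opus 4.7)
The plan is to pass to the pointwise majorant $H(y) = \nnnorm{y}^{-(d-\alpha)}(1+|y|/\xi)^{-2\alpha}$, so that $(G*G)(x) \le (H*H)(x)$ up to a constant, and then partition $\Z^d$ into three regions based on where $y$ sits relative to $0$ and $x$: the near-origin region $A_1 = \{y : |y| \le |x|/3\}$, the near-$x$ region $A_2 = \{y : |x-y| \le |x|/3\}$, and the remainder $A_3 = \{y : |y| > |x|/3 \text{ and } |x-y| > |x|/3\}$. These three sets cover $\Z^d$, and $A_1 \cap A_2 = \emptyset$ when $|x|\ge 1$.

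On $A_1$, the triangle inequality gives $|x-y|\asymp |x|$, so I extract the factor $H(x-y)\asymp \nnnorm{x}^{-(d-\alpha)}(1+|x|/\xi)^{-2\alpha}$ and reduce the problem to estimating $\sum_{|y|\le |x|/3}H(y)$. I split this sum at $|y|=\xi$: the part with $|y|\le \min(|x|/3,\xi)$ contributes $\asymp \min(|x|,\xi)^\alpha$ (using $d>\alpha$), and, when $|x|\gtrsim\xi$, the additional part $\xi<|y|\le |x|/3$ contributes $\int_\xi^{|x|/3} r^{\alpha-1}(\xi/r)^{2\alpha}\,\mathrm{d}r\asymp \xi^\alpha$. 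Hence $\sum_{A_1}H(y)\asymp \min(|x|,\xi)^\alpha$, and the $A_1$ contribution is at most $\min(|x|,\xi)^\alpha\cdot \nnnorm{x}^{-(d-\alpha)}(1+|x|/\xi)^{-2\alpha}$. Checking the two cases $|x|\le\xi$ (giving $\nnnorm{x}^{-(d-2\alpha)}$) and $|x|\ge\xi$ (giving $\xi^{3\alpha}/|x|^{d+\alpha}$) separately confirms this matches the target $\nnnorm{x}^{-(d-2\alpha)}(1+|x|/\xi)^{-3\alpha}$. The $A_2$ contribution is identical by the change of variables $y\mapsto x-y$.

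The bulk region $A_3$ is where I expect the real work. Here both $|y|$ and $|x-y|$ are at least $|x|/3$, and neither is distinguished. I apply AM-GM in the form $H(y)H(x-y)\le \tfrac12(H(y)^2+H(x-y)^2)$ and use translation invariance of counting measure to obtain $\sum_{A_3}H(y)H(x-y)\le \sum_{|y|\ge |x|/3}H(y)^2$. Splitting again at $|y|=\xi$, the relevant one-dimensional integrals are $\int r^{2\alpha-d-1}\,\mathrm{d}r$ on the inner shell and $\int r^{-d-2\alpha-1}\,\mathrm{d}r$ on the outer shell; both converge at infinity precisely because $d>2\alpha$, so each is controlled by its lower endpoint. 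This yields $\nnnorm{x}^{-(d-2\alpha)}$ when $|x|\le\xi$ and $\xi^{4\alpha}/|x|^{d+2\alpha}$ when $|x|\ge\xi$; the latter is $\le \xi^{3\alpha}/|x|^{d+\alpha}$ since $|x|\ge\xi$, again matching the target.

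The only subtle point is the $A_3$ analysis: the AM-GM step is lossy in principle, but the integrability exponent $2\alpha-d-1<-1$ supplied by the hypothesis $d>2\alpha$ makes the resulting symmetrised sum converge at the correct rate, so no sharper argument is needed. Everywhere else the estimate is tight up to constants, and the main obstacle is simply to organise the casework $|x|\lessgtr\xi$ carefully enough that all three regions yield bounds consistent with the claimed two-factor envelope.
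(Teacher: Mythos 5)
Your proof is correct, and it takes a genuinely more self-contained route than the paper's. The paper dispatches the bulk regime by citing the known convolution estimate of \cite[Proposition~1.7]{HHS03}, which gives $(G*G)(x) \lesssim \nnnorm x^{-(d-2\alpha)}$ unconditionally, and then only works in the tail regime $|x| \ge 2\xi$: there it uses the dichotomy that either $|y|$ or $|x-y|$ is at least $\frac12|x|$, extracts the factor $\xi^{2\alpha}/|x|^{d+\alpha}$ from the far argument, and bounds the remaining full-space sum by $\sum_y G(y) \lesssim \xi^\alpha$, yielding $\xi^{3\alpha}/|x|^{d+\alpha}$. You instead prove both regimes from scratch with a three-region decomposition: your $A_1$/$A_2$ analysis refines the paper's dichotomy by restricting the near sum to $|y|\le |x|/3$ (giving $\min(|x|,\xi)^\alpha$ rather than $\xi^\alpha$, which is exactly what is needed to recover the bulk bound $\nnnorm x^{-(d-2\alpha)}$ when $|x|\le\xi$), and your $A_3$ region, handled by AM--GM plus translation invariance and the integrability $2\alpha-d-1<-1$, replaces the citation entirely. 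The trade-off is clear: the paper's argument is shorter but leans on an external lemma; yours is longer but elementary and self-contained, and your casework ($|x|\lessgtr\xi$, with the intermediate range $\xi\le|x|\le 3\xi$ absorbed into constants, and small $|x|$ handled trivially via $\|G\|_2^2<\infty$ since $d>2\alpha$) checks out in all regimes, with the $A_3$ contribution in the tail even coming in a factor $(\xi/|x|)^\alpha$ better than required.
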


\begin{proof}
By \cite[Proposition~1.7]{HHS03}, we have
$(G*G)(x) \lesssim  \nnnorm x^{-(d-2\alpha)} $ for all $x$,
so it suffices to improve the bound when $\abs x \ge 2\xi$ to
\begin{equation} \label{eq:LR_decay_claim}
(G*G)(x) \lesssim \frac {\xi^{3\alpha}} { \abs x^{d+\alpha} } .
\end{equation}
In the sum $(G*G)(x) = \sum_{y\in \Zd} G(y) G(x-y)$,
either $\abs{x-y}$ or $\abs y$ must be at least $\frac 1 2 \abs x$.
In the first case, we use the assumed bound on $G(x-y)$ to get
\begin{equation}
\sum_{y\in \Zd : \abs{x-y} \ge \frac 1 2 \abs x}
G(y) G(x-y)
\lesssim  \frac {\xi^{2\alpha}} { ( \frac 1 2 \abs x )^{d+\alpha} }
	\sum_{y\in \Zd} G(y) ,
\end{equation}
and we use the hypothesis again to see that
\begin{equation}
\sum_{y\in \Zd} G(y)
\lesssim \sum_{\abs y \le \xi} \frac 1 { \nnnorm y^{d-\alpha} }
	+ \sum_{\abs y \ge \xi} \frac{ \xi^{2\alpha} }{ \abs y^{d+\alpha} }
\lesssim \xi^\alpha .
\end{equation}
This gives an upper bound
$\xi^{3\alpha} / \abs x^{d+\alpha}$ for the $\abs{x-y} \ge \frac 1 2 \abs x$ part of the convolution.
The $\abs{y} \ge \frac 1 2 \abs x$ part is analogous, with the decay
coming from $G(y)$ instead of $G(x-y)$.
This gives \eqref{eq:LR_decay_claim} and completes the proof.
\end{proof}

\section{Lower bound of Hypothesis~\ref{ass:Gbar} for short-range models}
\label{app:LB}

Hypothesis~\ref{ass:G} is
proved for various models in \cite{Slad23_wsaw,DP25a,DP25b,HMS23,DP25-Ising,LS25a,Liu25EJP,Hutc25,Liu25}.
In this appendix, we briefly sketch the proof
of the lower bound of \eqref{eq:GGbar} in Hypothesis~\ref{ass:Gbar},
for short-range self-avoiding walk,
branched polymers, percolation, and the Ising model.  References are given where full details can be found.

We use the fact that Hypothesis~\ref{ass:G} has been established in previous works with $g = g_{\mathrm{SR}}$.
Our goal is to prove that there are constants $c_0,s_2$ such that,
when $\xi(\beta) \ge s_2R$, we have
\begin{equation} \label{eq:GGlb}
\Gbar_{R,\beta}(x) - G_{R,\beta}(x)
\le c_0 \frac{\chi(\beta)^{\dc/2}}{V} \Gamma_{R,\beta}(x) .
\end{equation}

The model dependence in this bound arises via the diagrams for $\Gbar_{R,\beta}(x)-G_{R,\beta}(x)$ depicted in Figure~\ref{figure:Zd_diagrams}, which  project onto the torus diagrams in Figure~\ref{figure:diagrams other models}.
We estimate the diagrams using the torus convolution
\begin{equation}
(f\star g)(x) = \sum_{y\in \T_R^d} f(x-y)g(y)
\end{equation}
for functions $f,g:\T_R^d \to \R$.
From the near-critical upper bound \eqref{eq:Gdecay}, a convolution estimate, and Lemma~\ref{lem:periodic_sum}, the $m$-fold convolution of $\Gamma_{R,\beta}$ with itself satisfies, for $d>2m$,
\begin{equation}
    \Gbar_{R,\beta}^{\star m}(x)
    	\lesssim \frac{1}{\nnnorm x^{d-2m}} + \frac{\chi^m}{V}
    .
\end{equation}
Together with $|x|^2 \le (\frac R 2)^2 \lesssim \xi^2 \asymp \chi$
and the lower bound of $\Gamma_{R,\beta}(x)$ from Proposition~\ref{prop:Gamma},
this implies the useful inequality
\begin{equation} \label{eq:fstar}
    \Gbar_{R,\beta}^{\star m}(x)
    	\lesssim \chi^{m-1} \Gbar_{R,\beta} (x) .
\end{equation}

\begin{figure}[h]
\includegraphics[scale = 0.6]{saw}
\quad
\includegraphics[scale = 0.6]{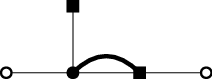}
\quad
\includegraphics[scale = 0.6]{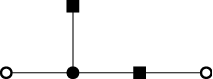}

\vspace{3mm}
\includegraphics[scale = 0.6]{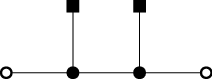}
\qquad
\includegraphics[scale = 0.6]{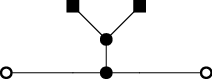}
\caption{$\Z^d$ configurations contributing to $\Gbar_{R,\beta}(x)-G_{R,\beta}(x)$.
First line, left to right: SAW, Ising, percolation.
Second line: two topologies for BP.
Thin lines represent $G_\beta$; the bold line for Ising represents $\Gbar_{R,\beta}$.
Hollow vertices are $0$ and $x' = x + Ru$,
filled vertices are summed over $\mathbb Z^d$,
and box vertices are summed over torus equivalent, distinct, unordered pairs in $\Z^d$.
}
\label{figure:Zd_diagrams}
\end{figure}

\begin{figure}[h]
\includegraphics[scale = 0.6]{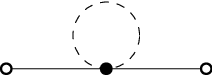}
\quad
\includegraphics[scale = 0.6]{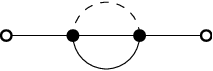}
\quad
\includegraphics[scale = 0.6]{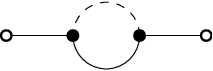}

\vspace{3mm}
\includegraphics[scale = 0.6]{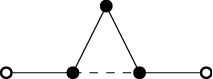}
\quad
\includegraphics[scale = 0.6]{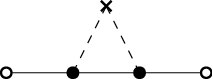}
\quad
\includegraphics[scale = 0.6]{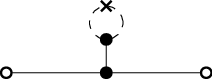}
\caption{Torus configurations contributing to $\Gbar_{R,\beta}(x)-G_{R,\beta}(x)$.
First line, left to right: SAW, Ising, percolation.
Second line:  three diagrams for BP.  Solid lines represent $\Gbar_{R,\beta}$,
dashed lines represent $\chi/V$,
two dashed lines connected by $\times$ represent $\chi^2/V$.
Hollow vertices are $0,x$ and filled vertices
are summed over the torus.}
\label{figure:diagrams other models}
\end{figure}

\subsection{Self-avoiding walk}

For SR SAW, the inequality \eqref{eq:GGlb} follows exactly as in
the derivation of \eqref{eq:GGbar} for LR SAW in Section~\ref{sec:SAW},
with $\alpha$ set equal to $2$.
The SAW diagram of Figure~\ref{figure:Zd_diagrams} was
encountered in Section~\ref{sec:SAW}.  It projects to the torus
SAW diagram in Figure~\ref{figure:diagrams other models}.

\subsection{Branched polymers}

Similarly to SAW, the difference $\Gbar_{R,\beta}(x)-G_{R,\beta}(x)$
for lattice trees arises from $\Z^d$ trees that do not project to a torus tree.
These trees must contain two distinct points that project to the same torus point.
This can happen with two topologies, as depicted in Figure~\ref{figure:Zd_diagrams},
depending on whether the unique path joining the distinct points
intersects the path joining $0$ and $x'$.
Analogous considerations to those for SAW
lead to an upper bound
by the three BP diagrams in Figure~\ref{figure:diagrams other models} \cite{LS25a}.
Using \eqref{eq:fstar},
these diagrams are bounded, for $m=4,3,2$ respectively, by
\begin{align}
    \frac{\chi^{5-m}}{V} \Gbar_{R,\beta}^{\star m} (x)
		& \lesssim
    \frac{\chi^{5-m}}{V}
    \chi^{m-1} \Gbar_{R,\beta}(x)
    =
    \frac{\chi^{4}}{V}
    \Gbar_{R,\beta}(x) .
\end{align}
The desired inequality \eqref{eq:GGlb} then follows since $\dc = 8$.

For lattice animals, minimal additional care is required to
specify  the unwrapping operation.
Once this is done, the rest is the same as for lattice trees \cite{LS25a}.

\subsection{Percolation}

A coupling of percolation on the torus and the infinite lattice $\Z^d$ is
discussed in detail in \cite{HHI07,HMS23}.  It involves an \emph{exploration process}
which provides the
unwrapping of a torus percolation cluster
to the infinite lattice.  From this, the upper
bound $G_{R,\beta}(x) \le \Gbar_{R,\beta}(x)$ follows directly.
It also leads to an upper bound on the difference
$\Gbar_{R,\beta}(x) - G_{R,\beta}(x)$ in which there is a connection in $\Z^d$ from
$0$ to a point $x'$ which projects to $\pi(x')=x$, passing through some point $y$, with $0$ also connected
to a point $y' \neq y$ with $\pi(y')= \pi(y)$.
A generic such configuration is depicted in
Figure~\ref{figure:Zd_diagrams}, and its representation as a
torus diagram is depicted in Figure~\ref{figure:diagrams other models}.
By \eqref{eq:fstar} with $m=3$, the torus diagram is bounded by
\begin{align}
\frac{ \chi } V \Gbar_{R,\beta}^{\star 3}(x)
	\lesssim \frac{\chi^3}{V} \Gbar_{R,\beta}(x).
\end{align}
This gives \eqref{eq:GGlb} because $\dc = 6$.

\subsection{Ising model}

The unwrapping procedure for the Ising model uses the random current representation,
which provides a percolation-like geometric representation for Ising correlation
functions.
The advantages of this geometric representation
were first championed in \cite{Aize82}.
A \emph{current} configuration $\bn$ is an assignment of a non-negative integer $\bn_b$ to each bond
$b$.  A lattice site $x$  is a \emph{source} if the sum of $\bn_b$ over bonds incident to $x$
is odd.  The set of sources of $\bn$ is denoted $\partial\bn$.
Let $w_\beta(\bn) = \prod_{b} \beta^{\bn_b} / \bn_b !$.
The random current
representation for the Ising two-point function is
\begin{align}
    G_\beta (x) =
    \frac{\sum_{\bn : \partial \bn = \{0,x\}}w_\beta(\bn)}
        {\sum_{\bn : \partial \bn = \varnothing}w_\beta(\bn)}.
\end{align}

Via this representation, a coupling between the Ising model on the torus
and on the infinite lattice is established in \cite{LPS25-Ising}.
The bound $G_{R,\beta}(x) \le \Gbar_{R,\beta}(x)$ is proved using this coupling.
The coupling also shows that $\Gbar_{R,\beta}(x)-G_{R,\beta}(x)$ is dominated by a
configuration diagrammatically depicted in Figure~\ref{figure:Zd_diagrams},
which projects to the torus diagram of
Figure~\ref{figure:diagrams other models}.  The latter diagram
is bounded using an extension of \eqref{eq:fstar} which implies that
\begin{align}
    \frac{\chi}{V}(\Gbar_{R,\beta} \star \Gbar_{R,\beta}^2 \star \Gbar_{R,\beta})(x)
    \lesssim
    \frac{\chi^2}{V} \Gbar_{R,\beta}(x).
\end{align}
Details are given in \cite{LPS25-Ising}.

\section{The exponents $\q$ and $\hat\q$}
\label{app:coppa}

In much of the literature on high-dimensional finite-size scaling
under PBC, an exponent
$\q=d/\dca$ (coppa) is computed via a connection with Lee--Yang zeros, e.g.,
\cite{F-SBKW16,BEHK22}.  The exponent $\q$
gives the divergence of a correlation length of order $R^{\qq}$ which exceeds the
system size $R$.  At the critical dimension $d=\dca$, there is a logarithmic
correction $R^{\qq}(\log R)^{\hat{\qq}}$, e.g., \cite{KJJ06}.
Both $\q$ and $\hat\q$ occur in various
scaling relations. Originally, $\q,\hat\q$ were called $q,\hat q$.

In our work, a connection with Lee--Yang zeros is not made.  On the other hand,
the correlation length $\xi$ of the infinite system plays a prominent role.
Its value at the point $\beta_*$ in the critical window obeys
\begin{equation}
    \xi(\beta_*) \asymp (V^{-\frac{2}{\gamma \dc}})^{-\nu}
    =
    R^{\frac{2\nu}{\gamma}\frac{d}{\dc}}
    =
    R^{\frac{2}{\alpha}\frac{d}{\dc}}
    =
    R^{\frac{d}{\dca}}.
\end{equation}
The exponent on the right-hand side agrees with and provides an alternate
interpretation for $\q$.

For the SR $n$-component $|\varphi|^4$ model in the critical dimension
$d=\dc=4$, it is well-known and rigorously proved
\cite{BBS-phi4-log,BSTW-clp} that $\gamma=2\nu=1$ with logarithmic
exponents $\hat\gamma = 2\hat\nu=
\frac{n+2}{n+8}$.
Also, the specific heat has exponent $\alpha=0$ for $n \ge 1$,
with
$\hat\alpha = \frac{4-n}{n+8}$ for $n=1,2,3$, whereas for $n=4$
the specific heat diverges
as $\log\log t^{-1}$ and for $n > 4$
it does not diverge at all \cite{BBS-phi4-log}.
In particular, $\hat\alpha$ is never strictly negative.
Let $\hat\theta = \frac{4-n}{2(n+8)}=\frac 12 -\hat\gamma$, for all $n\ge 1$ including $n \ge 4$.
It is argued in \cite[(3.6), (4.3)]{Kenna04} that under PBC the finite-volume
susceptibility has a log correction
$\chi_R \asymp V^{1/2}(\log V)^{1/2}$ and that the window width is
$V^{-1/2}(\log V)^{-\hat\theta}$.
It was proved rigorously that $\chi_R(\beta_c)\asymp V^{1/2}(\log V)^{1/2}$ in \cite{Park25b} using a rigorous RG analysis \cite{Park25a}.
For the hierarchical model, precise asymptotics for both the torus
susceptibility and window width were proved in \cite{MPS23}.

At the high-temperature edge of the window, the
infinite-volume correlation length therefore obeys
\begin{equation}
    \xi(\beta_*) \asymp V^{1/4}(\log V)^{\frac{\hat\theta}{2} + \hat\nu}
    =
    V^{1/4}(\log V)^{1/4}.
\end{equation}
If we interpret the right-hand side as $R^{\qq}(\log R)^{\hat\qq}$,
then we find that $\q=1$ and $\hat\q = \frac 14$ for all $n \ge 1$ (in agreement with
 \cite{Kenn12}), and that
\begin{equation}
\label{eq:qhat}
    2\hat\theta = 4(\hat\q -\hat\nu) = 1-4\hat\nu.
\end{equation}
This agrees with the scaling relation $\hat\alpha = 4(\hat\q-\hat\nu)$
from \cite[(1.39)]{Kenn12} when $n=1,2,3$.
For $n=4$, the exponent $\hat\alpha$ is zero in a logarithmic sense,  whereas
both sides of \eqref{eq:qhat} are exactly zero.  For $n>4$, the exponent
$\hat\alpha$ is ill-defined since the specific heat does not diverge.  Nevertheless,
\eqref{eq:qhat} remains valid for all $n\ge 1$, including $n >4$.  In this sense,
the scaling relation \eqref{eq:qhat} has wider validity than $\hat\alpha = 4(\hat\q-\hat\nu)$.

\section{Renormalisation group}
\label{app:RG}

In this appendix, we do not work in a mathematically rigorous manner.
Instead, we argue as in \cite{BZ85,Zinn21} to
compute the profile
$f_n$ for the (short- or long-range) $n$-component lattice $|\varphi|^4$ model
in dimensions $d>\dca=2\alpha$ under PBC.
The hypothesis of universality suggests
that the same scaling and profile found in this appendix
should apply to other $O(n)$ models above the upper critical
dimension, including Ising ($n=1$) and $XY$ ($n=2$) models.

The $|\varphi|^4$ model has spins $\varphi_x \in \R^n$ and
Hamiltonian
\begin{equation}
\label{eq:phi4H}
	H=
    \frac{1}{2} (\varphi,   (-\Delta)^{\alpha / 2}  \varphi)
    + \sum_{x}  (\frac{g}{4}  |\varphi_x|^4 + \frac{\nu}{2} |\varphi_x|^2
	+ \vec{h} \cdot  \varphi_x)
	,
\end{equation}
where $\vec{h}=(h,\ldots,h)\in \R^n$ is a constant external field and $\Delta$
is the discrete Laplacian.
We consider both SR (set $\alpha=2$ in all formulas)
and LR ($0<\alpha<\min\{2,d\}$) models
in dimensions $d> \dca=2\alpha$.
An observable $F$ has expectation
\begin{equation}
    \langle F \rangle = \frac{1}{Z} \int F(\varphi) e^{-H(\varphi)}
    D\varphi.
\end{equation}
The two-point function and  susceptibility are
\begin{equation}
    G_\nu(x) =  \frac 1n \langle \varphi_0\cdot \varphi_x\rangle_{h=0},
    \quad \chi(\nu) = \sum_{x\in\Z^d} G_\nu(x).
\end{equation}
For $d \ge 2\alpha$,
there is an infinite-volume critical value
$\nuc<0$ such that the Wilson renormalisation
group (RG) flow started from
$\nu=\nuc$  converges to a stable Gaussian fixed point.

For PBC, a theory of finite-size scaling based on the
RG is developed in \cite{BZ85} and \cite[Section~32.3]{Zinn21}.
That work considered short-range interactions, but it extends
mutatis mutandis to long-range, as we present here.
We consider now a volume $V=R^d$ with PBC in dimensions $d>2\alpha$.

Under the change of scale $x \mapsto \scale^{-1} x$,
the field scales as
$\varphi \mapsto \scale^{-\frac{d-\alpha}{2}}\varphi$.
Let $\nu=\nuc + t$.
The model remains near the Gaussian fixed point
as long as $\scale^{\alpha}|t|$ remains bounded.
In this regime, briefly put, the free energy is renormalised to linear order as
\begin{align}
	f(t , g,h) =  \scale^{-d} f( \scale^{y_t} t  ,  \scale^{y_i} g ,  \scale^{y_h} h  )
	+ \delta f,
\end{align}
where
$y_t = \alpha$, $y_i = - (d - 2 \alpha)$, $y_h = \frac{d+\alpha}{2}$,
and $\delta f$ is an inhomogeneous term which does not play a role in the present
calculation.
Near the fixed point,
$t (\scale) \sim a_d \scale^{y_t} t$
and $g (\scale) \sim \scale^{y_i}  \bar{g}$ for some constants
$a_d$ and $\bar{g}$.
(We write $f_1 \sim f_2$ for $\lim f_1/f_2 =1$.)

The susceptibility on the torus $\mathbb{T}^d_R$
can be computed via $\chi_R = n^{-1} R^{-d} \frac{\partial^2}{\partial h^2}  \log Z |_{h=0} $.
On the torus,
the Gaussian field with covariance $((-\Delta)^{\alpha /2} )^{-1}$ can be decomposed into two independent Gaussian fields $\zeta + \psi$, where $\psi$ is the
constant zero mode
and $\zeta$ is the rest of the spin field.
Integrating out $\zeta$ corresponds to performing the RG flow to scale $\scale =R$.
Then $\chi_R$ is given by the zero-mode integral
\begin{equation}
\label{eq:zero-mode}
	\chi_R
    \sim
    R^{\alpha}
    \frac{ \int_{\R^n} |\psi|^2  e^{- U_R(\psi)} \mathrm d \psi }
    {n \int_{\R^n} e^{-  U_R(\psi)}   \mathrm  d \psi}
\end{equation}
with
\begin{equation}
\label{eq:URdefi}
	 U_R(\psi)
    =
    \frac{1}{4} R^{-(d - 2 \alpha) } \bar{g} |\psi|^4
    + \frac{1}{2}a_d R^\alpha  t |\psi|^2 .
\end{equation}
The importance of the zero mode for finite-size scaling under PBC
has been stressed in \cite{WY14} from a different perspective.

After the change of variables $\psi \mapsto \bar{g}^{-1/4} R^{(d - 2 \alpha)/4} \psi$,
and for $\nu = \nuc+ sa_d^{-1} \bar{g}^{1/2} R^{-d/2}$
with $s \in (-\infty,\infty)$ (window scale $V^{-1/2}$),
this becomes
\begin{equation}
\label{eq:chi-profile}
    \chi_R
    \sim
     \bar{g}^{-1/2} V^{1/2} f_n(s),
\end{equation}
with the non-Gaussian profile $f_n$ given by
\begin{equation}
    f_n(s)
    =
    \frac{\int_{\R^n} |x|^2  e^{-\frac 14 x^4  -   \frac 12 s x^2}\mathrm dx}
    {\int_{\R^n}   e^{-\frac 14 x^4  -   \frac 12 s x^2}\mathrm dx} .
\end{equation}
After conversion to polar coordinates, this agrees with the formula for
$f_n$ in \eqref{eq:fndef}.
With the same value of $\nu$, the torus plateau is given by
\begin{equation}
\label{eq:RGplateau}
    \frac{\chi_R}{V}
    \sim
    \frac{f_n(s)}{\bar g^{1/2}V^{1/2}}
    ,
\end{equation}
which dominates the Gaussian contribution $|x|^{-(d-\alpha)}$ to $G_{R}(x)$
when $|x| \gtrsim R^{\frac{d}{2(d-\alpha)}}$; note that the exponent
can be written as $(2-\dca/d)\inv$ so is less than $1$.
If instead $t=\nu-\nuc \gg R^{-d/2}$ then a crossover
occurs to Gaussian behaviour
when $t$ is of order $R^{-\alpha}$.  In particular, for such $t$ the
susceptibility scales as $R^\alpha$, not as $R^{d/2}$.
The crossover is discussed in detail for the hierarchical model in \cite{MPS23}.

The above identifies the critical window with scale $V^{-1/2}$,
the susceptibility scale $V^{1/2}$, the plateau scale $V^{-1/2}$,
and the universal
profile $f_n(s)$, as in Tables~\ref{table:fss} and~\ref{table:profile}.

Universal ratios can be computed similarly.  E.g., with the same
$\nu=\nuc+s a_d^{-1} \bar g^{1/2}V^{-1/2}$,
the moments of the total spin $\Phi_R = \sum_{x\in \T_R^d}\varphi_{x}$ obey,
as $R \to \infty$,
\begin{equation}
\label{eq:Phiratio}
    \frac{\langle |\Phi_R|^4\rangle}{\langle |\Phi_R|^2 \rangle^2}
    \to
    \frac{\int_{\R^n} |x|^4  e^{-\frac 14 x^4  -   \frac 12 s x^2}\mathrm dx}
    {(\int_{\R^n} |x|^2  e^{-\frac 14 x^4  -   \frac 12 s x^2}\mathrm dx)^2} .
\end{equation}
The dimensionless ratio \eqref{eq:Phiratio} appears in the \emph{Binder cumulant} and
the \emph{renormalised coupling constant}.
At $\nuc$ ($s=0$), the right-hand side
of \eqref{eq:Phiratio} takes the value
\begin{equation}
           \frac n4
    \left(
    \frac{\Gamma(\frac{n}{4})}
    { \Gamma(\frac{n+2}{4})}
    \right)^{2}.
\end{equation}
\end{document}